\newcommand{\ignore}[1]{}
\newcommand{\etal}{{\em et al.}~}
\newtheoremstyle{prenum}
  {7pt}
  {7pt}
  {\slshape}
  {0pt}
  {\bfseries}
  { }
  {5pt}
  {\thmnumber{#2}\thmname{ #1}\thmnote{ (#3)}}
\theoremstyle{prenum}
\newtheorem{theorem}{Theorem}[section]
\newtheorem{lemma}[theorem]{Lemma}
\newtheorem{fact}[theorem]{Fact}
\newtheorem{proposition}[theorem]{Proposition}
\newtheorem{definition}{Definition}[section]
\newtheorem{assumption}[definition]{Assumption}
\newcommand{\ZZ}{\mathbb{Z}}
\newcommand{\RR}{\mathbb{R}}
\newcommand{\CC}{\mathbb{C}}
\newcommand{\FF}{\mathbb{F}}
\DeclareMathOperator{\Sp}{Spec}
\DeclareMathOperator{\Tr}{Tr}
\DeclareMathOperator{\Supp}{Supp}
\DeclareMathOperator{\spn}{span}
\DeclareMathOperator{\Adj}{Adj}
\DeclareMathOperator{\mat}{Mat}
\newcommand{\Exp}{\mathbb{E}}
\newcommand{\ket}[1]{{#1}}
\newcommand{\braket}[2]{\langle #1 , #2 \rangle}
\newcommand{\ketbra}[2]{{#1}{#2}^\dagger}
\newcommand{\norm}[1]{\left\lVert #1 \right\rVert}
\newcommand{\ip}[2]{\langle #1, #2 \rangle}
\newcommand{\dt}{\frac{d}{dt}}
\newcommand{\qbinom}[2]{\mbox{$\begin{bmatrix} {#1} \\ {#2} \end{bmatrix}_{q}$}}
\newcommand{\eone}{\epsilon_1}
\title{Of Shadows and Gaps in Spatial Search}
\author{Ada Chan\thanks{Department of Mathematics and Statistics, York University.}
\and Chris Godsil\thanks{Department of Combinatorics and Optimization, University of Waterloo.}
\and Christino Tamon\thanks{Department of Computer Science, Clarkson University. Contact: tino@clarkson.edu.}
\and Weichen Xie\thanks{Department of Mathematics, Clarkson University}}
\date{\today}
\begin{document}
\maketitle
\begin{abstract}
Spatial search occurs in a connected graph if a continuous-time quantum walk on the adjacency matrix
of the graph, suitably scaled, plus a rank-one perturbation induced by any vertex will unitarily map
the principal eigenvector of the graph to the characteristic vector of the vertex.
This phenomenon is a natural continuous-time analogue of Grover search.
The spatial search is said to be optimal if it occurs with constant fidelity and in time inversely proportional to
the shadow of the target vertex on the principal eigenvector.
Extending a result of Chakraborty \etal ({\em Physical Review A}, {\bf 102}:032214, 2020),
we prove a simpler characterization of optimal spatial search.
Based on this characterization, we observe that some families of distance-regular graphs,
such as Hamming and Grassmann graphs, have optimal spatial search.
We also show a matching lower bound on time for spatial search with constant fidelity,
which extends a bound due to Farhi and Gutmann for perfect fidelity.
Our elementary proofs employ standard tools, such as Weyl inequalities and Cauchy determinant formula.

\vspace{.1in}
\par\noindent{\em Keywords}: Quantum walk, spatial search, spectral gap, perturbation.
\end{abstract}

\section{Introduction}

In the seminal work \cite{g97}, Grover described a quantum algorithm with a provable quadratic speedup 
for the ubiquitous search problem. 
It was realized later that his algorithm can be viewed as a discrete-time quantum walk on the
complete graph \cite{a07,s04}. In another fundamental work, Farhi and Gutmann \cite{fg98} proposed a 
continuous-time analog of Grover search. Their work was generalized by Childs and Goldstone \cite{cg04} to arbitrary
graphs where the problem is known as {\em spatial search}.

\begin{figure}[t]
\begin{center}
\begin{tikzpicture}[
    main node/.style={circle,draw,font=\bfseries}, main edge/.style={-,>=stealth'},
    scale=0.5,
    stone/.style={},
    black-stone/.style={black!80},
    black-highlight/.style={outer color=black!80, inner color=black!30},
    black-number/.style={white},
    white-stone/.style={white!70!black},
    white-highlight/.style={outer color=white!70!black, inner color=white},
    white-number/.style={black}]
\tikzset{every loop/.style={thick, min distance=17mm, in=45, out=135}}

\tikzstyle{every node}=[draw, thick, shape=circle, circular drop shadow, fill={white}];
\path (0,+3.5) node (p0) [scale=0.8] {};
\path (-2.85,2.05) node (p6) [scale=0.8] {};
\path (+2.85,2.05) node (p1) [scale=0.8] {};
\path (+2.85,-0.1) node (p2) [scale=0.8] {};
\path (-2.85,-0.1) node (p5) [scale=0.8] {};
\path (+1.5,-2) node (p3) [scale=0.8] {};
\path (-1.5,-2) node (p4) [scale=0.8] {};
\draw[thick]
    (p0) -- (p1) -- (p2) -- (p3) -- (p4) -- (p5) -- (p6) -- (p0);
\draw[thick]
    (p0) -- (p2) -- (p4) -- (p6) -- (p1) -- (p3) -- (p5) -- (p0);
\draw[thick]
	(p0) -- (p3) -- (p6) -- (p2) -- (p5) -- (p1) -- (p4) -- (p0);
\draw[thick]
	(p0) -- (p4) -- (p1) -- (p5) -- (p2) -- (p6) -- (p3) -- (p0);
\path[-]
    (p0) edge [loop above] (p0);

\end{tikzpicture}
\vspace{.1in}
\caption{Grover search in continuous-time (see Farhi and Gutmann \cite{fg98}):
a continuous-time quantum walk with $H = \gamma A(K_n) + e_w e_w^T$ can perfectly transfer 
the density matrix $E_1=\frac{1}{n}J_n$ to the target state $e_w e_w^T$, for some $\gamma > 0$, 
in time $\frac{\pi}{2}\sqrt{n}$.
Here, $e_w$ denote the unit vector associated to vertex $w$ of the clique.
}
\label{fig:clique}
\end{center}
\end{figure}
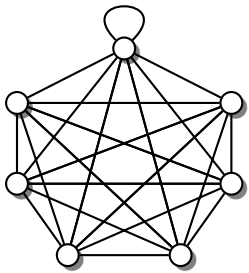

Suppose $G$ is an undirected and connected graph on $n$ vertices with normalized adjacency matrix $A$
whose spectral decomposition is given by $A = \sum_{r=1}^{d} \theta_r E_r$, where $1 = \theta_1 > \theta_2 > \ldots > \theta_d \ge 0$
and $E_r$ is the orthogonal projection onto the eigenspace corresponding to $\theta_r$.
We say $G$ has optimal spatial search if for any vector $w$ (which may correspond to the characteristic vector of a vertex of $G$), 
the continuous-time quantum walk
\[
	\rho(t) = e^{-it H}\rho(0)e^{it H} 
\]
with the time-independent Hamiltonian $H = \gamma A + \ketbra{w}{w}$, 
for a scaling factor $\gamma > 0$, maps the density matrix $\rho(0) = E_1$ 
to the target state $\ketbra{w}{w}$ with a constant fidelity, that is,  
\[
	f(t) := \Tr(\ketbra{w}{w}\rho(t)) = \Omega_n(1),
\]
in time $t = O_n(1/\epsilon_1)$, where $\epsilon_1 = \norm{E_1 w}$ is the shadow of the
target vertex on the principal eigenspace of $G$.

Farhi and Gutmann \cite{fg98} showed that the complete graphs have spatial search (which recovers Grover's
result in the continuous-time setting). They also proved a time lower bound of $\Omega_n(\sqrt{n})$
for any continuous-time quantum algorithm with unit fidelity on vertex-transitive graphs.
As our first result, we strengthen their time lower bound to $\Omega_n(1/\epsilon_1)$ 
which holds for constant fidelity (instead of perfect).
This lower bound justifies the requirement that the optimal time is $O_n(1/\epsilon_1)$.

Chakraborty \etal \cite{cnao16} observed a striking property:
a constant spectral gap $\Delta_2$ is sufficient for optimal spatial search. 
Here, $\Delta_2 = \theta_1 - \theta_2$ is the distance between the two largest normalized eigenvalues of the graph.
In particular, this implies that random graphs exhibit spatial search property almost surely.
But, this does not explain why the $n$-cube has spatial search 
(studied by Childs and Goldstone \cite{cg04}) 
since the spectral gap vanishes as $n$ grows.

Subsequently, Chakraborty \etal \cite{cnr20} improved the observation from \cite{cnao16}
by showing a characterization of optimal spatial search under the assumption of
\begin{equation} \label{eqn:cnr-assumption}
	\epsilon_1 \ll \frac{S_1 S_2}{S_3} \ \ \mbox{ and } \ \ \epsilon_1 \ll \sqrt{S_2}\Delta_2,
\end{equation}
where $S_k = \sum_{r=2}^{d} \norm{E_r w}^2(\theta_1 - \theta_r)^{-k}$, for $k=1,2,3$,
are spectral parameters related to the graph $G$.
Another crucial observation made in \cite{cnr20} is that $S_1$ is the best choice
for the scaling parameter $\gamma$.
As stated in \cite{cnr20}, the unconditional characterization of graphs with optimal spatial search
is a longstanding open question.

In this work, we improve the result of Chakraborty \etal \cite{cnr20} 
by showing a characterization of optimal spatial search under the simpler assumption 
\[
	\epsilon_1 \ll \sqrt{S_1}\Delta_2,
\] 
also under the choice of $\gamma = S_1$.
We show that our assumption is asymptotically similar to the second condition 
in Equation \eqref{eqn:cnr-assumption} for the relevant ranges of interest.
Our improvement is obtained through tighter estimates on the leading eigenvalue perturbations 
derived from a determinant formula of Cauchy.
We also observe a critical condition (hidden in previous analyses) for the strict interlacing between 
pairs of the two largest eigenvalues of the matrices, before and after perturbation.
For this, we explicitly require the second largest eigenvalue of the graph belong to the eigenvalue support 
of the target vertex, and then appeal to Weyl inequalities to provide strict interlacing.

We then apply the characterization to provide new examples of graphs with optimal spatial search 
and to offer alternative proofs for existing families. 
For example, we observe that the Hamming graphs $H(n,q)$, for any constant $q$, have optimal spatial search. 
As a special case, this include the binary $n$-cube $H(n,2)$ which was observed by Childs and Goldstone \cite{cg04}.
For another example, we observe that strongly regular graphs have optimal spatial search since they also have 
constant spectral gap.
This confirms the observation of Janmark \etal \cite{jmw14} obtained using degenerate perturbation theory.

For distance regular graphs with larger diameter, Wong \cite{w16} and then Tanaka \etal \cite{tsp} proved that 
the Johnson graphs $J(n,k)$, for constant $k \ge 3$, have spatial search. 
Since the constant spectral gap condition holds for Johnson graphs, this provides an alternative and immediate
proof that they have optimal spatial search.
Moreover, we also observe that Grassmann graphs and, in fact, most bounded diameter distance-regular graphs 
with classical parameters have optimal spatial search (see Figure \ref{fig:families}).
Both of these are again consequences of the constant spectral gap condition.

Our original motivation for this work was to understand obstructions to optimal spatial search.
To this end, we found a collection of necessary conditions for spatial search which are largely
based on techniques used in \cite{cnr20}. Aside from being a crucial ingredient for proving the 
tight characterization, these necessary conditions provide asymptotic explanations why certain 
families of graphs lack the spatial search property. For example, they can be used to show 
explicitly why cycles lack optimal spatial search -- a well-known folklore result. 
These conditions can potentially be adapted to other classes such as a small Cartesian product of cycles. 

The proofs we employ are elementary as they only use basic tools from matrix theory which
do not appeal to perturbative methods. We nevertheless adopt standard asymptotic arguments commonly 
used in random graphs and complexity of algorithms.


\section{Background}

We assume the standard inner product $\braket{v}{w}$ over $\CC^n$.
All vectors are assumed normalized under the $2$-norm defined by $\norm{w} = \sqrt{\braket{w}{w}}$.
The set of all $n \times n$ matrices with complex entries is denoted $\mat_n(\CC)$.
As with vectors, we define the $2$-norm of a matrix as $\norm{A} = \sqrt{\braket{A}{A}}$,
where $\braket{A}{B} = \Tr(A^\dagger B)$ is the inner product between matrices.
The spectrum $\Sp(A)$ of a matrix $A$ is the set of its eigenvalues.
In this work, we will focus primarily on Hermitian matrices whose eigenvalues are guaranteed to be real.
We adopt the notation $\lambda_i(A)$ to represent the $i$th largest eigenvalue of the matrix $A$.
We call a Hermitian matrix {\em normalized} if its spectrum lies in $[0,1]$ {\em and} $1$ is a simple eigenvalue;
so, in this case $1 = \lambda_1(A) > \lambda_2(A) \ge \ldots \ge \lambda_n(A) \ge 0$.

In stating the spectral decomposition of a Hermitian matrix, say $A = \sum_{r=1}^{d} \theta_r E_r$, 
we always assume the distinct eigenvalues are listed in decreasing order, that is, $\theta_1 > \theta_2 > \ldots > \theta_d$.
Recall that $E_r$ is the orthogonal projection onto the eigenspace corresponding to eigenvalue $\theta_r$
where $E_r$ is Hermitian with $E_r^2 = E_r$, $\sum_{r=1}^{d} E_r = I$, and $E_r E_s = E_r$ if $r=s$ 
and is $0$ otherwise.
The {\em eigenvalue support} of a vector $w \in \CC^n$ with respect to $A$ is defined as
\begin{equation}
\Supp_{A}(w) = \{\theta_r : \norm{E_r w} \neq 0\},
\end{equation} 
which is the set of eigenvalues whose eigenspaces are not fully contained in the subspace $w^\perp$.
For each positive integer $k$, we let
\begin{equation} \label{eqn:moment}
S_k = \sum_{r=2}^{d} \frac{\norm{E_r w}^2}{(\theta_1-\theta_r)^k}.
\end{equation}
These are spectral parameters which will play an important role in characterizing optimal spatial search.
They originally appeared in Childs and Goldstone \cite{cg04}.
Further background on matrix theory may be found in Horn and Johnson \cite{hj13}.

\paragraph{Asymptotics}
We use the standard asymptotic notation to compare the relative order of magnitude of 
two sequences of numbers $f_n$ and $g_n$ depending on a parameter $n \rightarrow \infty$. 
Our main source is Janson \etal \cite{jlr}. We assume $f_n,g_n > 0$ for sufficiently large $n$.
We write:
\begin{itemize}
\item $f_n \lesssim g_n$ or $f_n = O_n(g_n)$ as $n \rightarrow \infty$ 
	if there exist constants $c,n_0 > 0$ such that $f_n \le cg_n$ for $n \ge n_0$.
\item $f_n \gtrsim g_n$ or $f_n = \Omega_n(g_n)$ as $n \rightarrow \infty$ 
	if there exist constants $c,n_0 > 0$ such that $f_n \ge cg_n$ for $n \ge n_0$.
\item $f_n \asymp g_n$ or $f_n = \Theta_n(g_n)$ as $n \rightarrow \infty$ if $f_n = O_n(g_n)$ and $f_n = \Omega_n(g_n)$.
\item $f_n \ll g_n$ or $f_n = o_n(g_n)$ if $f_n/g_n \rightarrow 0$ as $n \rightarrow \infty$.
\item $f_n \gg g_n$ or $f_n = \omega_n(g_n)$ if $g_n/f_n \rightarrow 0$ as $n \rightarrow \infty$.
\item $f_n \sim g_n$ if $f_n/g_n \rightarrow 1$ as $n \rightarrow \infty$.
\end{itemize}
We omit the expression ``as $n \rightarrow \infty$'' when it is clear from context. 
Given that most results are asymptotic, we assume that $n$ is sufficiently large without explicitly stating this.

\medskip
Our focus is on a family of graphs $\{G_n\}_{n=1}^{\infty}$ instead of individual graphs, 
and hence most assertions are asymptotic in nature and will depend on $n$ as it tends to $\infty$.
When it is clear, we simply write $G_n$ for the family of graphs, and even simply $G$ if $n$ is understood 
from context.
Given a graph $G=(V,E)$ that is undirected, its adjacency matrix $A(G)$ is a matrix whose $(u,v)$-entry
is $1$ if $(u,v) \in E$, and is $0$ otherwise. In this work, we will allow the adjacency matrix
be a Hermitian matrix whose nonzero entries are complex valued, and will denote it as $H(G)$. 
For example, this may include the case of signed graphs ($\pm 1$ entries) or complex oriented graphs
($\pm i$ entries).

A continuous-time quantum walk on $G$ with a Hermitian adjacency matrix $H(G)$ is governed by the 
Schr\"{o}dinger equation defined by
\begin{equation} \label{eqn:schrodinger}
	\rho'(t) = -i[H(G), \rho(t)]
\end{equation}
where $\rho(t)$ is a positive semidefinite matrix of unit trace (also called a density matrix).
Here, $[A,B] = AB-BA$ denotes the commutator of two matrices $A,B \in \mat_n(\CC)$.
The solution of the above equation is given by
\[
	\rho(t) = e^{-it H(G)} \rho(0) e^{it H(G)}.
\]
We adopt the density matrix formulation of the Schr\"{o}dinger evolution since it leads to simpler
analyses overall (as global phase factors disappear, for example) and it can be easily generalized
to more realistic settings.
For more background on quantum information, please see Nielsen and Chuang \cite{nc}

Given that spatial search is heavily influenced by the seminal work \cite{g97},
in the rest of this paper we will call a graph with the optimal spatial search property {\em Groverian}.

\begin{definition} \label{def:spatial-search} 
Let $\{G_n\}$ be a family of graphs with normalized adjacency matrix $H(G_n)$ where $E_1$ is 
the orthogonal projection onto its principal eigenspace.
We say $G_n$ is $\gamma$-{\em Groverian} if for any $w \in \CC^n$ with $\norm{E_1 w} \neq 0$, 
the density matrix evolution defined by
\[
	\rho(t) = U(t) \rho(0) U(t)^{-1},
	\ \hspace{.3in} \
	\mbox{ with $\rho(0) = E_1$},
\]
where $U(t) = \exp(-it(\gamma H(G_n) + \ketbra{w}{w}))$,
there is a time $\tau = O_n(1/\norm{E_1 w})$ so that the fidelity satisfies
\begin{equation} \label{eqn:def-spatial-search}
	f(\tau) := \Tr(\ketbra{w}{w} \rho(\tau)) = \Omega_n(1).
\end{equation}
The family of graphs is Groverian if it is $\gamma$-{Groverian} for some $\gamma > 0$.
\end{definition}

The condition on time is the {\em quadratic speedup} requirement that is the hallmark signature of Grover search.
This is because the probability of measuring $w$ given the state $E_1$ is $\norm{E_1 w}^2$, 
and hence generating $w$ has a geometric time of $1/\norm{E_1 w}^2$. 

For a spatial search algorithm to be fully constructive, most previous works require the graphs be vertex transitive.
This assumption was made in Farhi and Gutmann \cite{fg98} and in Childs and Goldstone \cite{cg04}.
This guarantees that the choice of the scaling parameter $\gamma$ and the time $t$ in Definition \ref{def:spatial-search} 
are not dependent on the vertex $w$. But, as pointed out by Meyer and Wong \cite{mw15}, we can also allow graphs whose 
automorphism group has a constant number of orbits. They observed that the search algorithm can simply 
check each orbit separately as most of the relevant spectral parameters are constant within each orbit\footnote{A pertinent 
example given in \cite{mw15} is the {\em barbell} graph 
obtained from connecting two disjoint cliques $K_n$ by a single edge; this graph has two orbits with sizes
$2$ and $2n-2$, respectively.}.
Note that a vertex transitive graph has only a single orbit.
In this work, we will place the same assumption on our graphs.

Given that most of our statements hold for Hermitian matrices, we will view graphs largely
through their Hermitian adjacency matrices. To that end, we fix a convenient terminology to
capture a triplet of a Hermitian matrix, a unit norm vector and a positive scalar that will play
a central role in all of our assertions.

\begin{assumption} \label{assume:star}
We call $(H,w,\gamma)$ a {\em tuplet} if the following (notational) assumptions hold.
\begin{enumerate}[\hspace{.2in}(a)]
\item $H \in \mat_n(\CC)$ is a normalized Hermitian matrix whose spectral decomposition is $H = \sum_{r} \theta_r E_r$.
	Recall that as $H$ is normalized, its eigenvalues lie in $[0,1]$ and $1$ is a simple eigenvalue.

\item $w \in \CC^n$ is a vector with unit norm which satisfies $\norm{E_1 w} \neq 0$ and $\norm{E_1 w} = o_n(1)$.
	Whenever it is clear from context, we will use the abbreviated notation $\epsilon_1 := \norm{E_1 w}$.

\item $\gamma \in \RR$ is a positive scalar whereby the perturbed matrix $\gamma H + \ketbra{w}{w}$ has 
	the spectral decomposition $\sum_{p} \zeta_p F_p$.
\end{enumerate}
\end{assumption}

\section{Time Lower Bounds}

We motivate the condition on time in Definition \ref{def:spatial-search}. 
Farhi and Gutmann \cite{fg98} proved that spatial search with fidelity $1$ on a vertex-transitive graph 
requires time $\Omega_n(\sqrt{n})$. 
We extend their result to show a time lower bound of $\Omega_n(1/\eone)$ for constant fidelity 
which applies to arbitrary graphs. 
Note that $\eone = 1/\sqrt{n}$ for vertex-transitive graphs whenever $w$ denote the characteristic
vector of a vertex.
This matching lower bound justifies the choice of the optimal time.

\begin{theorem} \label{thm:density-time-lower-bound}
Let $(H,w,\gamma)$ be a tuplet where $H$ is $\gamma$-Groverian at time $\tau$.
Then, $\tau = \Omega_n(1/\eone)$.
\end{theorem}

\begin{proof}
The first part of the proof follows \cite{fg98} but in the density matrix language.
Let $H_w = H_0 + \ketbra{w}{w}$, where $H_0 = \gamma H$.
We compare two density matrix evolutions given by
\begin{equation} \label{eqn:density-evols}
	\rho'_w(t) = -i[H_w,\rho_w(t)],
	\ \ \
	\rho'_0(t) = -i[H_0,\rho_0(t)]
\end{equation}
with $\rho_w(0) = \rho_0(0) = E_1$.
Note $\rho_0(t) = E_1$ for all $t$.
Assume for now that the fidelity is one or $\rho_w(\tau) = \ketbra{w}{w}$. 
We will remove this assumption later.

The proof proceeds by analyzing bounds on $\norm{\rho_w(t) - \rho_0(t)}^2$.
First, by taking derivative, we have
\[
	\dt \norm{\rho_w(t) - \rho_0(t)}^2  
	= -2\ip{\rho_w(t)}{\rho_0(t)}'.
\]
From the product rule and Equation \eqref{eqn:density-evols}, we see that
\begin{eqnarray*}
\ip{\rho_w(t)}{\rho_0(t)}'
	& = & \ip{\rho_w(t)}{\rho'_0(t)} + \ip{\rho'_w(t)}{\rho_0(t)} \\
	& = & i \ip{[H_w, \rho_w(t)]}{E_1} \\
	& = & i \braket{w}{[E_1, \rho_w(t)]w}.
\end{eqnarray*}
Now, notice that
\begin{eqnarray*}
|\braket{w}{[E_1, \rho_w(t)]w}|
	& \le & 2|\braket{w}{E_1\rho_w(t)w}|,
		\ \mbox{ as $|x^\dagger [A,B]x| \le 2|x^\dagger (AB)x|$} \\
	& \le & 2\norm{E_1 w} \norm{\rho_w(t)w},
		\ \mbox{ since $|\braket{a}{b}| \le \norm{a}\norm{b}$} \\
	& \le & 2\eone,
		\ \hspace{.2in} \mbox{ because $\norm{\rho_w(t)w} \le 1$.}
\end{eqnarray*}
Putting these together, we get
\[
	\left|\dt \norm{\rho_w(t) - \rho_0(t)}^2\right| 
	\ \le \ 4 \eone.
\]
By the Fundamental Theorem of Calculus, we obtain
\begin{equation} \label{eqn:density-alt-ubound}
	\norm{\rho_w(\tau) - \rho_0(\tau)}^2 
	\ = \ \int_0^{\tau} \left(\dt \norm{\rho_w(t) - \rho_0(t)}^2\right) dt 
	\ \le \ 4\eone \tau.
\end{equation}
Since $\rho_w(\tau) = \ketbra{w}{w}$, we have
$\norm{\rho_w(\tau) - \rho_0(\tau)}^2 = 2(1-\eone^2)$, and therefore
\[
	1-\eone^2 
	\ \le \ 2\eone \tau,
\]
which yields the lower bound $\tau = \Omega_n(1/\eone)$.

Finally, we remove the assumption $\rho_w(\tau) = \ketbra{w}{w}$. 
Suppose that $\norm{\rho_w(\tau) - \ketbra{w}{w}}^2 \le \delta$, for some $\delta \in (0,1)$.
Our strategy is to reduce this case to the former case by using the following inequality.

\begin{fact} \label{fact:density-squared-triangle}
(Triangle Inequality for Squared Norm)
For matrices $A,B,C \in \mat_n(\CC)$, 
\[
	\norm{A-C}^2 \le 2\norm{A-B}^2 + 2\norm{B-C}^2.
\]
\end{fact}
\begin{proof}
Squaring the triangle inequality, we get 
\[
	\norm{A-C}^2 \le \norm{A-B}^2 + \norm{B-C}^2 + 2\norm{A-B}\norm{B-C}.
\]
Now, observe $0 \le (\norm{A-C} - \norm{B-C})^2$. 
\end{proof}

Applying Fact \ref{fact:density-squared-triangle}, we see that
\[
	2\norm{\rho_w(\tau) - \rho_0(\tau)}^2
	\ge \norm{\ketbra{w}{w} - \rho_0(\tau)}^2 - 2\norm{\rho_w(\tau) - \ketbra{w}{w}}^2 
	\ge \norm{\ketbra{w}{w} - \rho_0(\tau)}^2 - 2\delta.
\]
Thus, we have
\[
	\norm{\rho_w(\tau) - \rho_0(\tau)}^2 
		\ \ge \ (1 - \eone^2 - \delta)
\]
which can be combined with Equation \eqref{eqn:density-alt-ubound} to obtain $\tau = \Omega_n(1/\eone)$.
\end{proof}


\section{Interlacing}

We review some relevant tools from matrix theory and prove a few preliminary results.

\subsection{Weyl Inequalities}

A theorem of Weyl on eigenvalue interlacing is key to our analysis. 
Given that we restate the theorem slightly, we prove it for completeness.

\begin{lemma} (Subspace intersection, Lemma 4.2.3 in \cite{hj13}) \\
Let $W_1,\ldots,W_k$ be subspaces of $\CC^n$ and let $d = \dim W_1 + \ldots + \dim W_k - (k-1)n$.
If $d \ge 1$, then $ \dim(\bigcap_{i=1}^{k} W_i) \ge d$. 
In particular, there is a unit vector in $W_1 \cap \ldots \cap W_k$.
\end{lemma}

\begin{proof}
Note that $\dim(W_1 \cap W_2) + \dim(W_1 + W_2) = \dim W_1 + \dim W_2$, which implies
$\dim(W_1 \cap W_2) \ge \dim W_1 + \dim W_2 - n$. So, if $\dim W_1 + \dim W_2 - n \ge 1$, then
$W_1 \cap W_2$ contains a nonzero vector. The claim follows by induction.
\end{proof}

\begin{theorem} \label{thm:weyl} (Weyl Interlacing, restatement of Lemma 4.3.1 in \cite{hj13}) \\
Let $A,B$ be two $n \times n$ Hermitian matrices. 
For $i=1,\ldots,n$, we have
\begin{equation} \label{eqn:weyl1}
\lambda_i(A+B) \ \le \ \lambda_{i-j}(A) + \lambda_{j+1}(B),
	\ \ \mbox{ $j=0,\ldots,i-1$}
\end{equation}
with equality if and only if there is $v \neq 0$ so that $(A+B)v = \lambda_i(A+B)v$,
$Av = \lambda_{i-j}(A)v$, and $Bv = \lambda_{j+1}(B)v$,
and we have
\begin{equation} \label{eqn:weyl2}
\lambda_{i+j}(A) + \lambda_{n-j}(B) \ \le \ \lambda_{i}(A+B),
	\ \ \mbox{ $j=0,\ldots,n-i$}
\end{equation}
with equality if and only if there is $v \neq 0$ so that $(A+B)v = \lambda_i(A+B)v$,
$Av = \lambda_{i+j}(A)v$, and $Bv = \lambda_{n-j}(B)v$.
\end{theorem}

\begin{proof}
For $i=1,\ldots,n$, let $z_i$, $x_i$, and $y_i$ be orthonormal eigenvectors of $A$, $B$, and $A+B$, respectively, 
corresponding to their $i$th largest eigenvalues.

We define 
$W_1 = \spn\{z_{i-j},\ldots,z_n\}$,
$W_2 = \spn\{x_{j+1},\ldots,x_n\}$, 
and
$W_3 = \spn\{y_1,\ldots,y_i\}$.
Let $d_1 = \dim W_1 = n-i+j+1$, $d_2 = \dim W_2 = n-j$, and $d_3 = \dim W_3 = i$.
Since $\dim(W_1 \cap W_2 \cap W_3) = d_1 + d_2 + d_3 - \dim(W_1+W_2+W_3) \ge d_1 + d_2 + d_3 - 2n=1$,
the subspace $W_1 \cap W_2 \cap W_3$ contains a nonzero vector $v$.
For any nonzero $v \in W_1 \cap W_2 \cap W_3$, we have
\begin{equation}
\lambda_i(A+B) \le v^\dagger(A+B)v 
    \le \lambda_{i-j}(A) + \lambda_{j+1}(B),
    \ \mbox{ for $j=0,\ldots,i-1$. }
\end{equation}
Equality is achieved for $i,j$ if and only if there is a nonzero $v \in W_1 \cap W_2 \cap W_3$
for which $(A+B)v = \lambda_i(A+B)v$, $Av = \lambda_{i-j}(A)v$, and $Bv = \lambda_{j+1}(B)v$.

For the second inequality, observe that $\lambda_i(-A) = -\lambda_{n-i+1}(A)$. Therefore,
\begin{equation}
-\lambda_{n-i+1}(A+B) = \lambda_i(-A-B) \le \lambda_{i-j}(-A) + \lambda_{j+1}(-B) = -\lambda_{n-i+j+1}(A) - \lambda_{n-j}(B)
\end{equation}
which implies
\begin{equation}
\lambda_{n-i+j+1}(A) + \lambda_{n-j}(B) \le \lambda_{n-i+1}(A+B).
\end{equation}
Now, rename $n-i+1$ to $i$ and keep $j$ (and hence $n-i+j+1$ to $i+j$). This yields
\begin{equation}
\lambda_{i+j}(A) + \lambda_{n-j}(B) \le \lambda_{i}(A+B),
    \ \mbox{ for $j=0,\ldots,n-i$. }
\end{equation}
Equality is achieved for $i,j$ if and only if there is a nonzero $v \in W_1 \cap W_2 \cap W_3$
for which $(A+B)v = \lambda_i(A+B)v$, $Av = \lambda_{i+j}(A)v$, and $Bv = \lambda_{n-j}(B)v$ with
$W_1 = \spn\{z_1,\ldots,z_{i+j}\}$,
$W_2 = \spn\{x_1,\ldots,x_{n-j}\}$,
and
$W_3 = \spn\{y_i,\ldots,y_n\}$.
\end{proof}

\begin{lemma} \label{lemma:perturb-me} (Strict Interlacing)
Let $H$ be a normalized $n \times n$ Hermitian matrix with spectral decomposition $H=\sum_r\theta_r E_r$.
Let $w \in \CC^n$ be a vector with unit norm where $\theta_1,\theta_2 \in\Supp_H(w)$.
Then for any $\gamma>0$, the two largest eigenvalues of $\gamma H + \ketbra{w}{w}$ are 
simple and they strictly interlace the two largest eigenvalues of $\gamma H$.
\end{lemma}

\begin{proof}
We apply Theorem \ref{thm:weyl} with $A=\gamma H$ and $B = \ketbra{w}{w}$.
Note $\lambda_1(H)=\lambda_1(\ketbra{w}{w})=1$ are simple eigenvalues, and $\lambda_j(\ketbra{w}{w}) = 0$ for $j=2,\ldots,n$.
From Equation \eqref{eqn:weyl2}, with $i=1$ and $j=0$, we get
\begin{equation} \label{eqn:delta-plus}
\lambda_1(\gamma H) + \lambda_n(\ketbra{w}{w}) = \lambda_1(\gamma H) \le \lambda_1(\gamma H + \ketbra{w}{w}).
\end{equation}
Similarly, from Equation \eqref{eqn:weyl1}, with $i=2$ and $j=1$, we get
\begin{equation} \label{eqn:delta-minus}
\lambda_2(\gamma H + \ketbra{w}{w}) \le \lambda_1(\gamma H) + \lambda_2(\ketbra{w}{w}) = \lambda_1(\gamma H).
\end{equation}
Both of the inequalities above are strict since $\lambda_1(H)$ is simple and $E_1 w \neq 0$ implies $\ketbra{w}{w}z_1\neq0$ where $z_1$ is the principal eigenvector with $\ketbra{z_1}{z_1}=E_1$.

Next, we apply Theorem \ref{thm:weyl} with $i=2$ and $j=0$ in Equation \eqref{eqn:weyl2}, to get
\begin{equation} \label{eqn:theta-minus-simple}
\lambda_2(\gamma H) + \lambda_{n}(\ketbra{w}{w}) = \lambda_2(\gamma H) \le \lambda_2(\gamma H + \ketbra{w}{w}).
\end{equation}
Since $\theta_2\in\Supp_H(w)$, there exists an eigenvector $z_2$ corresponding to $\lambda_2(H)$ so that $\ketbra{w}{w}z_2\neq0$. For any nonzero vector $v$ from the subspace $W_1\cap W_2\cap W_3$ (as constructed in the proof of Theorem \ref{thm:weyl}), 
with $W_1 = \spn\{\ket{z_1},\ket{z_2}\}$, we always have $\ketbra{w}{w}z_2\neq0$
and hence the inequality in Equation \eqref{eqn:theta-minus-simple} is strict.
\end{proof}


\subsection{Cauchy's Equality}

The next formula due to Cauchy can be derived from the determinant of bordered matrices (see \cite{hj13}).

\begin{lemma} (Cauchy) \label{lemma:cauchy}
For any $n \times n$ matrix $A$, any vectors $x,y\in\CC^n$, we have
\begin{equation} \label{eqn:cauchy0}
\det(A + xy^{\dagger}) = \det(A) + y^{\dagger} \Adj(A) x.
\end{equation}
Moreover, if $A$ is nonsingular, then 
\begin{equation} \label{eqn:cauchy}
\det(A + xy^{\dagger}) = \det(A) (1 + y^{\dagger} A^{-1}x).
\end{equation}
\end{lemma}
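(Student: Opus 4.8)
The plan is to realize $\det(A + xy^\dagger)$ as the determinant of a single bordered $(n+1)\times(n+1)$ matrix and then evaluate that determinant in two different ways. Concretely, I would set
\[
	M = \begin{pmatrix} A & -x \\ y^\dagger & 1 \end{pmatrix}
\]
and compute the Schur complement of the bottom-right $1\times 1$ block, giving $\det M = \det\!\big(A - (-x)(1)^{-1}y^\dagger\big) = \det(A + xy^\dagger)$. This takes care of the left-hand side of both \eqref{eqn:cauchy0} and \eqref{eqn:cauchy} at once.

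For the right-hand side of \eqref{eqn:cauchy0}, I would expand $\det M$ by cofactors along the last row $(\,\overline{y_1},\ldots,\overline{y_n},\,1\,)$. The entry $1$ in position $(n+1,n+1)$ contributes exactly $\det A$, since its minor is the $A$-block. For each $j \le n$, deleting the last row and column $j$ leaves the $n\times n$ matrix whose columns are those of $A$ with the $j$th removed and the column $-x$ appended; relocating $-x$ into position $j$ produces the matrix $A^{(j\to x)}$ obtained from $A$ by replacing column $j$ with $x$, at the cost of a sign $(-1)^{n-j+1}$ which combines with the expansion sign $(-1)^{n+1+j}$ to give $+1$. Expanding $\det A^{(j\to x)}$ once more along column $j$ yields $\sum_i x_i C_{ij}(A)$ in terms of the cofactors of $A$, so the $j$th term is $\overline{y_j}\sum_i x_i C_{ij}(A)$. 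Summing over $j$ and using $\Adj(A)_{ji} = C_{ij}(A)$ identifies the total with $\det A + y^\dagger \Adj(A) x$, which is \eqref{eqn:cauchy0}.

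For \eqref{eqn:cauchy}, when $A$ is nonsingular I would simply divide \eqref{eqn:cauchy0} by $\det A$ and invoke $\Adj(A) = \det(A)\,A^{-1}$; alternatively, one can read it off immediately by taking the Schur complement of $M$ with respect to the top-left block $A$, which gives $\det M = \det(A)\big(1 + y^\dagger A^{-1}x\big)$ directly.

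The only delicate point is the sign bookkeeping in the cofactor expansion of the bordered matrix — the permutation sign from moving the replaced column into place, combined with the $(-1)^{(n+1)+j}$ from the row expansion — but, as indicated above, these cancel and leave a clean $+1$. If one prefers to avoid the bordered matrix entirely, an equally short route is to expand $\det(A + xy^\dagger)$ by multilinearity of the determinant in its columns, the $j$th column being $a_j + \overline{y_j}\,x$; every term containing two or more copies of the column $x$ vanishes because of a repeated column, and the surviving terms ($\det A$, plus one contribution $\overline{y_j}\,\det A^{(j\to x)}$ for each $j$) reassemble into exactly \eqref{eqn:cauchy0}. I would present the bordered-matrix version as the main proof and mention the multilinearity argument as a remark.
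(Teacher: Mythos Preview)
Your proof is correct, and it follows precisely the route the paper points to: the paper does not actually prove this lemma but simply remarks that it ``can be derived from the determinant of bordered matrices'' with a citation to Horn and Johnson, which is exactly your Schur-complement/cofactor argument on the bordered matrix $M$. Your sign bookkeeping is right, and both your derivation of \eqref{eqn:cauchy} (via $\Adj(A)=\det(A)A^{-1}$ or via the other Schur complement) and the multilinearity alternative are valid.
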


We apply the above lemma to provide sharper estimates for the case when $\gamma=S_1$ 
on the locations of the two largest perturbed eigenvalues relative to the unperturbed principal eigenvalue.

\begin{proposition} \label{prop:deltas}
Let $(H,w,S_1)$ be a tuplet. Then
\begin{equation} \label{eqn:cauchy-plus}
	\epsilon_1^2 \ < \ \zeta_1 - S_1 \ \lesssim \ \sqrt{S_1} \epsilon_1.
\end{equation}
Moreover, if $\theta_2 \in \Supp_H(w)$, then
\begin{equation} \label{eqn:cauchy-minus}
	0 \ < \ S_1 - \zeta_2 \ \lesssim \ \sqrt{S_1} \epsilon_1. 
\end{equation}
\end{proposition}

\begin{proof}
Let $\tilde{H} = S_1 H + \ketbra{w}{w}$ and $\Delta_r = 1 - \theta_r$, for all $r$.
Note that $\phi(\tilde{H},t) = \det((tI - S_1 H) - \ketbra{w}{w})$. 
Applying Lemma \ref{lemma:cauchy} and assuming $m_r$ is the multiplicity of $\theta_r$, we get
\begin{eqnarray*}
\phi(\tilde{H},t) 
	& = & \det(tI - S_1 H)(1 - {w^{\dagger}}{(tI - S_1 H)^{-1}w}) \\
	& = & \left(\prod_r (t - S_1\theta_r)^{m_r}\right) \left(1 - \sum_r \frac{\norm{E_r w}^2}{(t - S_1\theta_r)}\right).
\end{eqnarray*}
The bound $\zeta_1 - S_1 > \epsilon_1^2$ follows by verifying that $\phi(\tilde{H}, S_1 + \epsilon_1^2)$ is negative.
Observe that if $\phi(\tilde{H}, S_1 + \beta) > 0$, 
then $\zeta_1 - S_1 \le \beta$.
We write the condition $\phi(\tilde{H}, S_1 + \beta) > 0$ as
\[
	1 
	\ > \ \sum_r \frac{\norm{E_r w}^2}{S_1\Delta_r + \beta}
	\ = \ \frac{\epsilon_1^2}{\beta} + \sum_{r \neq 1} \frac{\norm{E_r w}^2}{S_1\Delta_r + \beta}.
\]
Because $\Delta_r \le 1$, for each $r$, it follows that $(1 + \beta/S_1)^{-1}$ is an upper bound 
for the sum $\sum_{r \neq 1} \norm{E_r w}^2/(S_1\Delta_r + \beta)$.
So, to satisfy the previous inequality, it suffices to require
\[
	1 \ > \ \frac{\epsilon_1^2}{\beta} + \frac{1}{1 + \beta/S_1}.
\]
After straightforward calculations, we obtain 
$\beta^2 - \epsilon_1^2 \beta - S_1\epsilon_1^2 > 0$. The roots of this quadratic equation are given by
\[
	\beta_\pm = \frac{1}{2}(\epsilon_1^2 \pm \sqrt{\epsilon_1^4 + 4S_1\epsilon_1^2}).
\]
The positive root is given by $\beta_{+} = O_n(\sqrt{S_1}\epsilon_1)$, since $\epsilon_1 \ll 1$,
which proves Equation \eqref{eqn:cauchy-plus}.

We denote $\delta_{-} = \zeta_2 - S_1$.
By Lemma \ref{lemma:perturb-me}, $|\delta_{-}| > 0$.
Now, observe that if $\phi(\tilde{H}, S_1 - \beta) > 0$ then $|\delta_{-}| \le \beta$.
As above, we write the condition $\phi(\tilde{H}, S_1 - \beta) > 0$ as
\[
	1 
	\ < \ \sum_r \frac{\norm{E_r w}^2}{S_1\Delta_r - \beta}
	\ = \ -\frac{\epsilon_1^2}{\beta} + \sum_{r \neq 1} \frac{\norm{E_r w}^2}{S_1\Delta_r - \beta}.
\]
This yields
\[
	\sum_{r \neq 1} \frac{\norm{E_r w}^2}{S_1\Delta_r} \frac{1}{1 - \beta/(S_1\Delta_r)}
	\ > \ 
	1 + \frac{\epsilon_1^2}{\beta}.
\]
As $\Delta_r \le 1$, for $r \ge 2$, we note that $1/(1 - \beta/S_1)$ is a lower bound
for the expression on the left-hand side. Therefore, we may require instead
\[
	\frac{1}{1 - \beta/S_1} \ > \ 1 + \frac{\epsilon_1^2}{\beta}
\]
which simplifies to
$\beta^2 + \epsilon_1^2\beta - S_1\epsilon_1^2 \ > \ 0$.
The maximum root of the quadratic polynomial is given by
\[
	\beta = \frac{1}{2}(-\epsilon_1^2 + \sqrt{\epsilon_1^4 + 4S_1\epsilon_1^2})
		\ \sim \ \sqrt{S_1}\epsilon_1
\]
which proves Equation \eqref{eqn:cauchy-minus}.
\end{proof}


\section{Necessary Gaps} \label{section:necessary}

We describe necessary conditions for graphs to be Groverian, but first we derive some useful
preliminary observations. Our analysis borrows heavily ideas from Chakraborty \etal \cite{cnr20}.

We start by restating the machinery in \cite{cnr20} (specifically, Theorem 4) using our notation 
for the sake of consistency and to point out certain explicit assumptions that are required.
Given a tuplet $(H,w,\gamma)$, for each $p$, we have
$(\gamma H + \ketbra{w}{w})F_p = \zeta_p F_p$ or
\begin{equation} \label{eqn:pre-genesis}
\ketbra{w}{w}F_p = (\zeta_p I - \gamma H)F_p,
\end{equation}
which implies
\begin{equation} \label{eqn:genesis0}
\norm{F_p w}^2 w = (\zeta_p I - \gamma H)F_p w.
\end{equation}
If $\zeta_p \not\in \Sp(\gamma H)$ holds, it clearly guarantees
\begin{equation} \label{eqn:genesis0b}
\frac{F_p w}{\norm{F_p w}^2} = (\zeta_p I - \gamma H)^{-1} w,
\end{equation}
which further yields
\begin{equation} \label{eqn:another-unity}
\frac{1}{\norm{F_p w}^2} = \sum_r \frac{\norm{E_r w}^2}{(\zeta_p - \gamma\theta_r)^2}.
\end{equation}
Moreover, Equation \eqref{eqn:genesis0b} yields an expression of unity
\begin{equation} \label{eqn:original-unity}
1 = \sum_{r=1}^{d} \frac{\norm{E_r w}^2}{\zeta_p - \gamma\theta_r}.
\end{equation}
Returning to Equation \eqref{eqn:genesis0}, after multiplying both sides by $E_1$, we derive
\begin{equation} \label{eqn:genesis1}
\norm{F_p w}^2 E_1 w = (\zeta_p - \gamma) E_1 F_p w.
\end{equation}
By Theorem \ref{thm:weyl}, since $\norm{E_1 w} \neq 0$, we have strict interlacing where $\zeta_2 < \gamma < \zeta_1$.
Furthermore, $\gamma \neq \zeta_r$, for all $r=1,\ldots,n$. 
This allows us to write 
\begin{equation} \label{eqn:substitute}
	E_1 F_p w \ = \ \frac{\norm{F_p w}^2}{\zeta_p - \gamma} E_1 w,
	\ \ \ \mbox{ $p=1,\ldots,m$.}
\end{equation}

Now, analyzing the fidelity of the quantum walk with the Hamiltonian $\tilde{H} = \gamma H + \ketbra{w}{w}$, we get
\begin{equation} \label{eqn:squared-fidelity}
	f(t)
	\ = \ \Tr(w w^\dagger e^{-it\tilde{H}} E_1 e^{it\tilde{H}})
	\ = \ \sum_{p,q} e^{-it(\zeta_p - \zeta_q)} w^\dagger F_p E_1 F_q w.
\end{equation}
After expanding $w^\dagger F_p E_1 F_q w$ to $(w^\dagger F_p E_1)(E_1 F_q w)$ since $E_1^2 = E_1$,
we use Equation \eqref{eqn:substitute} to substitute $E_1 F_q w$ and $(E_1 F_p w)^\dagger$ to get 
\[
	f(t)
	\ = \
	\sum_p e^{-it\zeta_p} \frac{\norm{F_p w}^2}{\zeta_p - \gamma} \sum_q e^{it\zeta_q} \frac{\norm{F_q w}^2}{\zeta_q - \gamma}
		\norm{E_1 w}^2
\]
since $(w^\dagger E_1)(E_1 w) = \norm{E_1 w}^2$. Thus,
\begin{equation} \label{eqn:qwalk-delta-plus}
	f(t)
	\ = \ \eone^2 \left|\sum_{p} e^{-it\zeta_p} \frac{\norm{F_p w}^2}{(\zeta_p - \gamma)}\right|^2.
\end{equation}
By triangle inequality, the above becomes
\begin{equation} \label{eqn:triangle}
	f(t) \ \le \ \eone^2 \left(\sum_{p} \frac{\norm{F_p w}^2}{|\zeta_p - \gamma|}\right)^2.
\end{equation}
Finally, from Equation \eqref{eqn:qwalk-delta-plus} at the time of origin $t=0$, we get another expression of unity,
\begin{equation} \label{eqn:signed-unity}
1 \ = \ \left|\frac{\norm{F_1 w}^2}{\zeta_1 - \gamma} - \sum_{p \ge 2} \frac{\norm{F_p w}^2}{\gamma - \zeta_p}\right|
\end{equation}
since the first term $\norm{F_1 w}^2/(\zeta_1 - \gamma)$ and the subsequent terms $\norm{F_p w}^2/(\zeta_p - \gamma)$, 
for $p \ge 2$, differ in sign because $\zeta_1 > \gamma > \zeta_p$.
As $|A - B| \ge |A| - |B|$ by triangle inequality,
the last equation immediately implies the following pair of inequalities:
\begin{equation} \label{eqn:nudge}
	\sum_{p \ge 2} \frac{\norm{F_p w}^2}{\gamma - \zeta_p} \le \frac{\norm{F_1 w}^2}{\zeta_1 - \gamma} + 1,
	\ \ \
	\frac{\norm{F_1 w}^2}{\zeta_1 - \gamma} \le \sum_{p \ge 2} \frac{\norm{F_p w}^2}{\gamma - \zeta_p} + 1.
\end{equation}

Using the above, we are ready to observe some necessary conditions for a graph $G$ to be Groverian.
In the next theorem, we show some conditions on $\eone$ in relation to the eigenvalue gaps
between the largest unperturbed eigenvalue of $\gamma H(G)$ and the perturbed eigenvalues $\zeta_m$
of $\gamma H(G) + \ketbra{w}{w}$. 
In particular, for a graph $G$ to be Groverian, 
$\eone$ must be asymptotically equal to the first gap $\delta_+ := \zeta_1 - \gamma$,
it must be asymptotically greater or equal to the second gap $\delta_{-} := \gamma - \zeta_2$,
and
it must be asymptotically equal to some gap $\gamma - \zeta_p$, for some $p \ge 2$, 
but not necessarily the second gap.
Recall that by strict interlacing, we know that $\gamma\theta_2 < \zeta_2 < \gamma < \zeta_1$,
where $\theta_2$ is the second largest eigenvalue of $H(G)$.

The first two observations in the following theorem are implicit in the proof of 
Theorem 4 in \cite{cnr20} but it will be useful to restate them in our notation below.
The third observation appears to be new.

\begin{theorem} \label{thm:necessary-generic}
For a tuplet $(H,w,\gamma)$, suppose one of the following conditions holds:
\begin{enumerate}[\hspace{0.2in}(i)]
\item $\eone \not\asymp \zeta_1 - \gamma$ (equivalently, $\eone \ll \zeta_1-\gamma$ or $\eone \gg \zeta_1-\gamma$), or
\item $\eone \ll \gamma - \zeta_2$, or
\item $\gamma - \zeta_{p-1} \ll \eone \ll \gamma - \zeta_p$, for some $p \ge 3$.
\end{enumerate}
Then, $H$ is not $\gamma$-Groverian.
\end{theorem}

\begin{proof}
We treat each condition as a separate case.

\medskip
\par\noindent{\em Case (i)}: Denote $\delta_{+} := \zeta_1 - \gamma$ and assume that $\delta_{+} \not\asymp \epsilon_1$.
From Equation \eqref{eqn:triangle}, we apply Equation \eqref{eqn:nudge} to obtain the following upper bound
\begin{equation} \label{eqn:replace-me}
	f(t)
	\ \le \ \eone^2 \left( \frac{2\norm{F_1 w}^2}{\delta_{+}} + 1 \right)^2
	\ = \ \left( 2\frac{\epsilon_1}{\delta_{+}} \norm{F_1 w}^2 + \epsilon_1 \right)^2,
\end{equation}
which shows that fidelity goes to zero if $\epsilon_1 \ll \delta_{+}$.
For the opposite direction, apply Equation \eqref{eqn:another-unity} with $p=1$ to get
\[
		\frac{1}{\norm{F_1 w}^2} \ = \ \sum_{r=1}^{d} \frac{\norm{E_r w}^2}{(\zeta_1 - \gamma \theta_r)^2}.
\]
But, the sum is bounded from below by its first term, and so
$\norm{F_1 w}^2 \le \delta_{+}^2/\eone^2$.
Returning to Equation \eqref{eqn:replace-me} and using the preceding upper bound, 
\begin{equation} \label{eqn:two-delta-plus-over-epsilon}
	f(t)
	\le \left(\frac{2\delta_{+}}{\eone} + \eone\right)^2,
\end{equation}
which shows that fidelity goes to zero if $\delta_{+} \ll \eone$.

\medskip
\par\noindent{\em Case (ii)}: 
Let $\delta_{-} := \zeta_2 - \gamma$. 
We start with Equation \eqref{eqn:triangle} 
and use Equation \eqref{eqn:signed-unity} to rewrite a term in the upper bound as
\begin{equation} \label{eqn:pre-flip-me}
	\sum_{p} \frac{\norm{F_p w}^2}{|\gamma - \zeta_p|} 
	\ \le \ 1 + 2 \sum_{p \ge 2} \frac{\norm{F_p w}^2}{\gamma - \zeta_p}
	\ \le \ 1 + 2 \frac{(1 - \norm{F_1 w}^2)}{|\delta_{-}|},
\end{equation}
since $|\delta_{-}| \le \gamma - \zeta_p$ for each $p \ge 2$.
Thus, we derive an upper bound on the fidelity,
\begin{equation} \label{eqn:flip-me}
	f(t)
	\le \left(\frac{2\eone}{|\delta_{-}|} (1 - \norm{F_1 w}^2) + \eone\right)^2,
\end{equation}
which shows that fidelity tends to zero if $\eone \ll |\delta_{-}|$.

\medskip
\par\noindent{\em Case (iii)}: 
Starting with Equation \eqref{eqn:pre-genesis}, after multiplying by $E_1$, we derive
\[
	E_1 \ketbra{w}{w} F_p = (\zeta_p - \gamma)E_1 F_p.
\]
Taking a product with itself but for index $q$, we get
\[
	\eone^2 F_q\ketbra{w}{w}F_p = (\zeta_q - \gamma)(\zeta_p - \gamma)F_q E_1 F_p.
\]
Upon taking the trace, this yields
\begin{equation} \label{eqn:flip-trick}
	\eone^2 \norm{F_p w}^2 = (\zeta_p - \gamma)^2 \braket{F_p}{E_1}.
\end{equation}
By Equation \eqref{eqn:squared-fidelity}, we have
\[
	\sum_{p,q} e^{-it(\zeta_p - \zeta_q)} w^\dagger F_p E_1 F_q w
	=
	\sum_{p,q} e^{-it(\zeta_p - \zeta_q)} 
		\frac{\eone}{(\zeta_p - \gamma)} \frac{\eone}{(\zeta_q - \gamma)} \norm{F_p w}^2 \norm{F_q w}^2,
\]
where we have used Equation \eqref{eqn:substitute} twice (for both $p$ and $q$).
Now, note we may apply Equation \eqref{eqn:flip-trick} to ``flip'' one of the ratios as follows:
\[
	\sum_{p,q} e^{-it(\zeta_p - \zeta_q)} 
		\frac{\eone}{\zeta_p - \gamma} \frac{\zeta_q - \gamma}{\eone} \norm{F_p w}^2 \braket{F_q}{E_1}.
\]
This allows us to partition the sums around $m$ based on whether $\epsilon_1 \ll \zeta_p - \gamma$ or $\epsilon_1 \gg \zeta_p - \gamma$.
In all cases, the corresponding terms tend to $0$ as $n \rightarrow \infty$.
\end{proof}

\subsection{Failure around $S_1$}

We describe necessary conditions on $\gamma$ relative to $S_1$.
First, we show failure whenever $\gamma$ is too large relative to $S_1$
(which is an alternate restatement of the first half of Theorem 4 in \cite{cnr20}).

\begin{theorem} \label{thm:necessary-above-s1}
For a tuplet $(H,w,\gamma)$,
if $\eone S_1 \ll \gamma - S_1$, then $H$ is not $\gamma$-Groverian.
\end{theorem}

\begin{proof}
Starting with Equation \eqref{eqn:original-unity} with $p=1$,  we see
\[
	1 \ = \ \sum_{r=1}^{d} \frac{\norm{E_r w}^2}{\zeta_1 - \gamma \theta_r}
	\ = \ \frac{\eone^2}{\delta_{+}} + \sum_{r=2}^{d} \frac{\norm{E_r w}^2}{\gamma(1-\theta_{r}) + \delta_{+}}
	\ \le \ \frac{\eone^2}{\delta_{+}} + \frac{S_1}{\gamma}.
\]
After minor rearrangements, we obtain 
\begin{equation} \label{eqn:delta-plus-upper}
	\delta_{+} 
	\ \le \ \eone^2 \frac{\gamma}{\gamma - S_1}
	\ = \ \eone^2 \left(1 + \frac{S_1}{\gamma - S_1}\right)
	\ \ll \ \eone(1 + o_n(1))
\end{equation}
which proves the claim by appealing to Theorem \ref{thm:necessary-generic}, case {\em (i)}.
\end{proof}

Next, we show failure conditions when $\gamma$ is centered around $S_1$.
We will use this later to show that cycles are not Groverian.

\begin{theorem} \label{thm:necessary-around-s1}
Let $(H,w,\gamma)$ be a tuplet where there is a constant $c > 0$ so that $\norm{E_r w} \le c\eone$, for all $r \ge 2$.
Let 
\[
	I_\alpha = \{r \ge 2: \gamma < \eone^\alpha(1-\theta_r)^{-1}\}.
\]
If $\gamma = S_1(1 + o_n(1))$ and $|I_\alpha| \ge 2c^2$, for some $\alpha \in (1,2)$, 
then $H$ is not $\gamma$-Groverian.
\end{theorem}

\begin{proof}
By Theorem \ref{thm:necessary-generic}, it suffices to show $\delta_{+} \ll \epsilon_1$,
where $\delta_{+} = \zeta_1 - \gamma$. 
Let $\Delta_r = 1-\theta_r$.
Our plan is to prove $\delta_{+} = O_n(\epsilon_1^\alpha)$ for the given $\alpha \in (1,2)$.
By Lemma \ref{lemma:cauchy}, it is enough to show $\phi(H, \gamma + \epsilon_1^\alpha) > 0$ or
equivalently
\[
	1 \ > \ \sum_{r=1}^{d} \frac{\norm{E_r w}^2}{\gamma\Delta_r + \epsilon_1^\alpha}
		\ = \ \epsilon_1^{2-\alpha} + \frac{1}{\gamma} \sum_{r \ge 2} \frac{\norm{E_r w}^2}{\Delta_r + (\epsilon_1^\alpha/\gamma)}.
\]
Let us focus on the last summation. After splitting the summation into two parts, we may bound
each part from above as follows,
\[
	\sum_{r \ge 2} \frac{\norm{E_r w}^2}{\Delta_r + (\epsilon_1^\alpha/\gamma)}
	\le
	\sum_{\substack{r \ge 2:\\\Delta_r < \epsilon_1^\alpha/\gamma}} \frac{\norm{E_r w}^2}{2\Delta_r}
	+
	\sum_{\substack{r \ge 2:\\\Delta_r \ge \epsilon_1^\alpha/\gamma}} \frac{\norm{E_r w}^2}{\Delta_r}.
\]
So, to prove the claim, it suffices to show that
\begin{equation} \label{eqn:penultimate}
	1 
	\ > \
	\epsilon_1^{2-\alpha} + \frac{1}{\gamma} \sum_{r \ge 2} \frac{\norm{E_r w}^2}{\Delta_r}
		- \frac{1}{2\gamma} \sum_{\substack{r \ge 2:\\\Delta_r < \epsilon_1^\alpha/\gamma}} \frac{\norm{E_r w}^2}{\Delta_r}.
\end{equation}
Assume $\gamma = S_1 + \beta$ for some $\beta = o_n(S_1)$. Then, we require that
\begin{equation}
	1 
	\ > \
	\epsilon_1^{2-\alpha} + \frac{1}{1 + (\beta/S_1)} \underbrace{\left(\frac{1}{S_1}\sum_{r \ge 2} \frac{\norm{E_r w}^2}{\Delta_r}\right)}_{=1}
		- \frac{1}{2\gamma} \sum_{\substack{r \ge 2:\\\gamma < \epsilon_1^\alpha/\Delta_r}} \frac{\norm{E_r w}^2}{\Delta_r},
\end{equation}
which is equivalent to
\begin{equation}
	\frac{1}{2\gamma} \sum_{\substack{r \ge 2:\\\gamma < \epsilon_1^\alpha/\Delta_r}} \frac{\norm{E_r w}^2}{\Delta_r} + o_n(1)
	\ > \ 
	\epsilon_1^{2-\alpha}.
\end{equation}
Therefore, we may omit the $o_n(1)$ term as it suffices to satisfy
\[
	\frac{1}{2\gamma} \sum_{\substack{r \ge 2:\\\gamma < \epsilon_1^\alpha/\Delta_r}} \frac{\norm{E_r w}^2}{\Delta_r}
	\ > \ 
	\epsilon_1^{2-\alpha}
	\ \ \mbox{ or } \ \
	\sum_{\substack{r \ge 2:\\\gamma < \epsilon_1^\alpha/\Delta_r}} 
		\frac{\norm{E_r w}^2}{\epsilon_1^2} \frac{\epsilon_1^\alpha}{\gamma\Delta_r}
	\ > \ 
	2.
\]
If $\epsilon_1 \le c\norm{E_r w}$, for all $r \ge 2$, 
the claim follows as $|I_\alpha| \ge 2c^2$.
\end{proof}

\subsection{Near Perfect Fidelity}

When the fidelity is $1-o_n(1)$, we show that the principal perturbed eigenspace must have a significant
overlap with the subspace spanned by the principal eigenspace and the target state.

\begin{theorem}
Let $(H,w,\gamma)$ be a tuplet.
If $H$ is $\gamma$-Groverian with fidelity $1-o_n(1)$, then 
the principal eigenvector of $\gamma H + \ketbra{w}{w}$ is an equal superposition of
$w$ and the principal eigenvector of $H$, up to $o_n(1)$ terms.
\end{theorem}

\begin{proof}
Let $z_1$ and $y_1$ be the principal eigenvectors of $H$ and $\gamma H + \ketbra{w}{w}$, respectively.
Thus, $E_1 = z_1 z_1^\dagger$ and $F_1 = y_1 y_1^\dagger$.
Starting with Equation \eqref{eqn:replace-me} and then applying Equation \eqref{eqn:substitute} with $p=1$, we get
\[
	f(t) 
	\ \le \ 4 \frac{\norm{F_1 w}^4}{\delta_{+}^2}\norm{E_1 w}^2 
	\ = \ 4\norm{E_1 F_1 w}^2
	\ = \ 4|\braket{y_1}{z_1}|^2 |\braket{y_1}{w}|^2.
\]
If $f(t) = 1 - o_n(1)$, we obtain $1-o_n(1) \le 2|\braket{y_1}{z_1}||\braket{y_1}{w}|$.
Since $|\braket{y_1}{z_1}|^2 + |\braket{y_1}{w}|^2 \le 1+o_n(1)$, we have
$0 \le (|\braket{y_1}{z_1}| - |\braket{y_1}{w}|)^2 =  o_n(1)$, which proves the claim.
\end{proof}


\section{Sufficient Gap}

In this section, we prove our optimal characterization of $S_1$-Groverian graphs under a simpler assumption.
First, we justify that the choice of $\gamma = S_1$ is almost best possible.
By Equation \eqref{eqn:original-unity} with $p=1$, we have
\[
	1 = \frac{\eone^2}{\delta_+} + \sum_{r \neq 1} \frac{\norm{E_r w}^2}{\delta_+ + \gamma\Delta_r} 
	\asymp \eone + \sum_{r \neq 1} \frac{\norm{E_r w}^2}{\delta_+ + \gamma\Delta_r}
\]
as $\eone \asymp \delta_+$ 
is a necessary condition due to Theorem \ref{thm:necessary-generic}$(i)$.
Recall that $f_n \asymp g_n$ denotes that $f_n$ and $g_n$ are within constant factors of each other.
Therefore,
\[
	1-\eone \asymp 
	\frac{1}{\gamma} \sum_{r \neq 1} \frac{\norm{E_r w}^2}{\Delta_r} \frac{1}{1 + (\eone/\gamma\Delta_r)}.
\]
If $\gamma \gtrsim \sqrt{S_1}$, since $\eone \ll \sqrt{S_1}\Delta_2$, we have $\eone/\gamma\Delta_r = o_n(1)$. 
This shows
\[
	\gamma(1-\eone) \asymp \sum_{r \neq 1} \frac{\norm{E_r w}^2}{\Delta_r} \frac{1}{1 + (\eone/\gamma\Delta_r)}
	= S_1(1 + o_n(1)).
\]
As $\eone = o_n(1)$, we have $\gamma \asymp S_1$.

The next theorem is our main characterization for optimal spatial search for $\gamma = S_1$.

\begin{theorem} \label{thm:spatial-search}
Let $(H,w,\gamma)$ be a tuplet where $\theta_2 \in \Supp_H(w)$.
If $\eone \ll \sqrt{S_1}\Delta_2$, then
$H$ is $S_1$-Groverian if and only if $S_2/S_1^2 \asymp 1$.
\end{theorem}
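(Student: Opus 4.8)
The plan is to analyze the fidelity expression \eqref{eqn:qwalk-delta-plus} directly, reducing it to a two-term trigonometric sum dominated by the two largest perturbed eigenvalues $\zeta_1, \zeta_2$. The engine driving everything is Proposition \ref{prop:deltas}, which under the hypothesis $\theta_2 \in \Supp_H(w)$ gives $\epsilon_1^2 < \delta_+ \lesssim \sqrt{S_1}\epsilon_1$ and $0 < S_1 - \zeta_2 \lesssim \sqrt{S_1}\epsilon_1$, where $\delta_+ = \zeta_1 - S_1$. Combined with the interlacing bound $\gamma\theta_2 < \zeta_2$ from Lemma \ref{lemma:perturb-me}, the hypothesis $\epsilon_1 \ll \sqrt{S_1}\Delta_2$ forces $\delta_- := S_1 - \zeta_2 \ll S_1\Delta_2 \le S_1\Delta_r$ for all $r \ge 2$; so the gap from $\zeta_2$ to any unperturbed eigenvalue $S_1\theta_r$ ($r\ge 2$) is $\Theta_n(S_1\Delta_r)$, which is the regime where the "flip trick" estimates of Section \ref{section:necessary} are tight rather than merely upper bounds. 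First I would establish, using \eqref{eqn:another-unity} with $p=1$ and $p=2$ together with these gap estimates, two-sided asymptotics for $\norm{F_1 w}^2$ and $\norm{F_2 w}^2$: the $r=1$ term dominates the $p=1$ sum giving $\norm{F_1 w}^2 \asymp \delta_+^2/\epsilon_1^2$, and for the $p=2$ sum the $r=1$ term contributes $\epsilon_1^2/\delta_-^2$ while the remaining terms contribute $\sum_{r\ge 2}\norm{E_r w}^2/(S_1\Delta_r)^2 (1+o_n(1)) = S_2/S_1^2 (1+o_n(1))$.

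Next I would show that the condition $S_2/S_1^2 \asymp 1$ is exactly what makes the $p \ge 3$ tail of \eqref{eqn:qwalk-delta-plus} negligible and the $p\in\{1,2\}$ part a genuine Rabi-type oscillation. For the tail: using \eqref{eqn:flip-trick} to rewrite $\norm{F_p w}^2/|\zeta_p - S_1|$ as $|\zeta_p - S_1|\,\braket{F_p}{E_1}/\epsilon_1^2$ for $p\ge 3$, and noting $|\zeta_p - S_1| \le S_1$ while $\sum_{p\ge 3}\braket{F_p}{E_1} \le 1$, one bounds the tail contribution to the sum $\sum_p \norm{F_p w}^2/|\zeta_p - S_1|$ by something of order $S_1/\epsilon_1 \cdot (\text{something small})$; more carefully, partition $p\ge 3$ by whether $|\zeta_p - S_1| \ll \epsilon_1$ or $\gg \epsilon_1$ (as in case (iii) of Theorem \ref{thm:necessary-generic}) and use $S_2/S_1^2 \asymp 1$ to control $\sum_{p\ge 3}\norm{F_p w}^2$ via \eqref{eqn:another-unity}-type identities. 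For the main part, I would normalize $a_p := \norm{F_p w}^2/(\zeta_p - S_1)$ for $p=1,2$; \eqref{eqn:signed-unity} gives $|a_1 + a_2| = 1 - o_n(1)$ (absorbing the tail), with $a_1 > 0$, $a_2 < 0$, and from the $\norm{F_j w}^2$ asymptotics above one checks $|a_1| \asymp |a_2| \asymp 1$ precisely when $\delta_+ \asymp \delta_-$ and $S_2/S_1^2 \asymp 1$. Then \eqref{eqn:qwalk-delta-plus} reads $f(t) = \epsilon_1^2|a_1 e^{-it\zeta_1} + a_2 e^{-it\zeta_2} + (\text{tail})|^2$, and choosing $t = \tau$ so that $(\zeta_1 - \zeta_2)\tau \approx \pi$ (which requires $\tau \asymp 1/(\zeta_1 - \zeta_2) = 1/(\delta_+ + \delta_-) \asymp 1/(\sqrt{S_1}\epsilon_1) = O_n(1/\epsilon_1)$, using $S_1 \asymp 1$, which follows since $H$ is normalized so $\Delta_r \le 1$ and hence $S_1 \le \sum_{r\ge 2}\norm{E_r w}^2 \le 1$, while $S_1 \ge 1 - \epsilon_1^2 \asymp 1$ cannot be assumed — I would instead note $\tau \asymp 1/\sqrt{S_1}\epsilon_1$ and verify $1/\sqrt{S_1} = O_n(1/\epsilon_1 \cdot \epsilon_1/\sqrt{S_1})$... actually since $\Delta_2 \le 1$ the hypothesis gives $\epsilon_1 \ll \sqrt{S_1}$, so $1/\sqrt{S_1} \ll 1/\epsilon_1$ and the time bound holds) makes $|a_1 e^{-it\zeta_1} + a_2 e^{-it\zeta_2}|^2 \asymp (|a_1| + |a_2|)^2 \asymp 1$, so $f(\tau) = \Omega_n(\epsilon_1^2)$...

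Here I realize $f(\tau) = \Omega_n(\epsilon_1^2)$ is \emph{not} constant fidelity, so the normalization must be reexamined: in fact the correct reading of \eqref{eqn:qwalk-delta-plus} pairs each $\norm{F_p w}^2/(\zeta_p - S_1)$ against the global constraint \eqref{eqn:signed-unity}, and the point is that $a_1, a_2$ are each $\Theta_n(1/\epsilon_1)$ in magnitude (not $\Theta_n(1)$) — indeed $\norm{F_1 w}^2 \asymp \delta_+^2/\epsilon_1^2$ and $\delta_+ \gtrsim \epsilon_1^2$ give $a_1 = \norm{F_1 w}^2/\delta_+ \asymp \delta_+/\epsilon_1^2 \lesssim \sqrt{S_1}/\epsilon_1$, and the lower bound $a_1 \gtrsim 1/\epsilon_1$ needs $\delta_+ \gtrsim \epsilon_1$, which is the content of $\epsilon_1 \asymp \delta_+$ — wait, that is exactly the necessary condition, but here we are proving sufficiency and must \emph{derive} $\delta_+ \asymp \epsilon_1$ from $S_2/S_1^2\asymp 1$. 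So the actual first step is: from \eqref{eqn:cauchy-plus}, $\delta_+ \lesssim \sqrt{S_1}\epsilon_1$; for the matching lower bound $\delta_+ \gtrsim \epsilon_1$ (really $\gtrsim \sqrt{S_1}\epsilon_1$ since $S_1\asymp 1$) one uses \eqref{eqn:original-unity} with $p=1$ rearranged as $1 - \epsilon_1^2/\delta_+ = \sum_{r\ge 2}\norm{E_r w}^2/(\gamma\Delta_r + \delta_+)$ and the hypothesis $S_2/S_1^2 \asymp 1$ to pin $\sum_{r\ge 2}\norm{E_r w}^2/(S_1\Delta_r+\delta_+) \to$ a value bounded away from $1$, forcing $\delta_+ \asymp \epsilon_1^2/(1-c) \asymp \epsilon_1^2$... no — then $a_1 \asymp \epsilon_1^2/\epsilon_1^2 = 1$, giving $f \asymp \epsilon_1^2$ again. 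The resolution, which I would nail down carefully, is that $S_2/S_1^2 \asymp 1$ is equivalent to $\sum_{r\ge 2}\norm{E_r w}^2/(S_1\Delta_r)^2 \asymp 1/S_1^2 \cdot S_2 \asymp 1$ meaning $\norm{F_2 w}^2 \asymp \epsilon_1^2/\delta_-^2 + 1 \asymp 1$ (since $\epsilon_1/\delta_- \lesssim 1$), so $a_2 \asymp 1/\delta_- \gtrsim 1/(\sqrt{S_1}\epsilon_1) \asymp 1/\epsilon_1$, and by \eqref{eqn:signed-unity} $a_1 + a_2 = \pm 1$ then forces $a_1 \asymp 1/\epsilon_1$ as well with matching sign cancellation only at $t=0$; hence $f(\tau) = \epsilon_1^2|a_1 e^{-i\zeta_1\tau} + a_2 e^{-i\zeta_2\tau}|^2 \asymp \epsilon_1^2 \cdot (1/\epsilon_1)^2 = \Theta_n(1)$ at the resonant time. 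The converse direction — that $f(\tau) = \Omega_n(1)$ at some $\tau = O_n(1/\epsilon_1)$ forces $S_2/S_1^2 \asymp 1$ — I would obtain by contradiction: if $S_2/S_1^2 = \omega_n(1)$ then $\norm{F_2 w}^2$ and the tail swamp $\norm{F_1 w}^2$ in \eqref{eqn:another-unity}-bookkeeping so that in \eqref{eqn:triangle} $\sum_p \norm{F_p w}^2/|\zeta_p - S_1| = o_n(1/\epsilon_1)$, killing fidelity; if $S_2/S_1^2 = o_n(1)$ then $\norm{F_1 w}^2 \to 1$, $\delta_+ \lesssim \epsilon_1 \sqrt{S_2}/\sqrt{S_1} = o_n(\epsilon_1)$ by the refined Cauchy estimate, so $\delta_+ \ll \epsilon_1$ and Theorem \ref{thm:necessary-generic}(i) applies. \textbf{The main obstacle} is the bookkeeping that converts $S_2/S_1^2 \asymp 1$ into the two-sided control $\norm{F_1 w}^2 \asymp \delta_+^2/\epsilon_1^2 \asymp 1$ and $\norm{F_2 w}^2 \asymp 1$ with $\delta_+ \asymp \delta_- \asymp \sqrt{S_1}\epsilon_1$ simultaneously, and in showing the $p \ge 3$ tail in \eqref{eqn:qwalk-delta-plus} contributes $o_n(1/\epsilon_1)$ to the amplitude uniformly in $t$ — this last point is where the hypothesis $\epsilon_1 \ll \sqrt{S_1}\Delta_2$ is essential, since it is what guarantees $\zeta_2$ is separated from the bulk $\{S_1\theta_r\}_{r\ge 2}$ so that no third perturbed eigenvalue $\zeta_3$ sneaks within $O_n(\epsilon_1)$ of $S_1$.
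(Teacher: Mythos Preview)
Your overall architecture matches the paper's: analyze \eqref{eqn:qwalk-delta-plus}, isolate the $p\in\{1,2\}$ contribution as a Rabi oscillation, control the tail $p\ge 3$ using $|\delta_\pm|\ll S_1\Delta_2$, and for the converse invoke Theorem~\ref{thm:necessary-generic}(i). But there is a genuine gap, and it is exactly the step you keep circling around without landing on.

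The missing piece is a \emph{two-sided} asymptotic for $\delta_\pm$ in terms of $S_1,S_2,\epsilon_1$. Proposition~\ref{prop:deltas} gives you only $\epsilon_1^2<\delta_+\lesssim\sqrt{S_1}\,\epsilon_1$ and $0<|\delta_-|\lesssim\sqrt{S_1}\,\epsilon_1$. You never prove a matching lower bound, yet you repeatedly use one: the parenthetical ``(since $\epsilon_1/\delta_-\lesssim 1$)'' is precisely the lower bound $|\delta_-|\gtrsim\epsilon_1$, asserted without justification; your time estimate $\tau\asymp 1/(\delta_++|\delta_-|)\asymp 1/(\sqrt{S_1}\,\epsilon_1)$ also presupposes $\delta_++|\delta_-|\gtrsim\sqrt{S_1}\,\epsilon_1$; and your converse for $S_2/S_1^2=o_n(1)$ invokes a ``refined Cauchy estimate'' $\delta_+\lesssim\epsilon_1\sqrt{S_2}/\sqrt{S_1}$ that appears nowhere. (That case is in any event vacuous by Fact~\ref{fact:variance}.) Without this lower bound, your $|a_2|\asymp 1/|\delta_-|$ could be arbitrarily large, and your fidelity and time estimates both collapse.

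The paper closes this gap in one stroke. Starting from \eqref{eqn:original-unity} with $\gamma=S_1$ and using the exact identity $\frac{1}{S_1\Delta_r+\delta_\pm}=\frac{1}{S_1\Delta_r}-\frac{\delta_\pm}{S_1\Delta_r(S_1\Delta_r+\delta_\pm)}$, the constant term $\frac{1}{S_1}\sum_{r\ge 2}\frac{\|E_rw\|^2}{\Delta_r}=1$ cancels against the left side, leaving the \emph{exact} relation
\[
\frac{\epsilon_1^2}{\delta_\pm^2}\;=\;\frac{1}{S_1^2}\sum_{r\ge 2}\frac{\|E_rw\|^2}{\Delta_r^2}\cdot\frac{S_1\Delta_r}{S_1\Delta_r+\delta_\pm}.
\]
Since $|\delta_\pm|\lesssim\sqrt{S_1}\,\epsilon_1\ll S_1\Delta_2\le S_1\Delta_r$, every factor on the right tends to $1$, so $\epsilon_1^2/\delta_\pm^2\sim S_2/S_1^2$ \emph{unconditionally} under the theorem's hypotheses. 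This single asymptotic does all the work: it gives $|\delta_\pm|\asymp\epsilon_1$ exactly when $S_2/S_1^2\asymp 1$ (sufficiency: time bound and constant fidelity follow immediately; your $\|F_pw\|^2$ computation via \eqref{eqn:another-unity} then yields $\|F_pw\|^2\sim\delta_\pm^2/(2\epsilon_1^2)$ for $p=1,2$), and it reduces the converse to a one-liner ($S_2/S_1^2\not\asymp 1\Rightarrow\epsilon_1\not\asymp\delta_+\Rightarrow$ Theorem~\ref{thm:necessary-generic}(i)). This is the step you should supply in place of the circular assertion on $\epsilon_1/\delta_-$.
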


\begin{proof}
By Lemma \ref{lemma:perturb-me}, we know that $\zeta_1$ and $\zeta_2$ are simple and $\zeta_1,\zeta_2 \not\in \Sp(S_1 H)$. 
Therefore, let $y_1$ and $y_2$ be their corresponding eigenvectors. 
Define $\delta_{+} = \zeta_1 - S_1$ and $\delta_{-} = \zeta_2 - S_1$. Also, let $\Delta_r = 1-\theta_r$.
Using Equation \eqref{eqn:original-unity} with $p=1,2$, we derive
\begin{equation} \label{eqn:unity0c}
1 \ = \ \sum_{r=1}^{d} \frac{\norm{E_r w}^2}{\zeta_p - S_1\theta_r} 
	\ = \ \frac{\epsilon_1^2}{\delta_\pm} + \sum_{r=2}^{d} \frac{\norm{E_r w}^2}{S_1\Delta_r} \frac{1}{1 + \delta_\pm/(S_1\Delta_r)}.
\end{equation}
Note $(1+\alpha)^{-1} = 1 - \alpha + \alpha^2 (1+\alpha)^{-1}$ holds for all $\alpha \neq -1$.
Using $\alpha = \delta_\pm/(S_1\Delta_r)$, 
after reorganizing and cancelling terms, we arrive at
\begin{equation} \label{eqn:base-point0}
\frac{\epsilon_1^2}{\delta_\pm^2} 
	\ = \ \frac{1}{S_1^2}
		\sum_{r=2}^{d} \frac{\norm{E_r w}^2}{\Delta_r^2} \frac{S_1\Delta_r}{\delta_\pm + S_1\Delta_r}.
\end{equation}
Thus far, this is similar to the first portion of the proof of Theorem 2 in \cite{cnr20}. 
But, now we exploit the sharper estimates given by Proposition \ref{prop:deltas} which will 
considerably simplify the rest of our proof.

\begin{lemma} \label{lemma:delta-plus-minus}
$\epsilon_1^2/\delta_{\pm}^2 \sim S_2/S_1^2$.
\end{lemma}

\begin{proof} 
Proposition \ref{prop:deltas} shows that $|\delta_{\pm}| = O_n(\sqrt{S_1}\epsilon_1)$,
and as $\epsilon_1 \ll \sqrt{S_1}\Delta_2$, we get $|\delta_{\pm}| \ll S_1\Delta_2$.
Applying this to Equation \eqref{eqn:base-point0} proves the claim.
\end{proof}

\begin{lemma} \label{lemma:full-expanse}
$2\epsilon_1^2/\delta_{+}^2 \sim 1/\norm{F_1 w}^2$
and
$2\epsilon_1^2/\delta_{-}^2 \sim 1/\norm{F_2 w}^2$.
\end{lemma}

\begin{proof}
We apply Equation \eqref{eqn:another-unity} to $\zeta_1$ and $\zeta_2$, and after some rearrangements, we get
\begin{equation} \label{eqn:local}
\frac{1}{\norm{F_p w}^2} 
	= \frac{\epsilon_1^2}{\delta_\pm^2} 
		+ \frac{1}{S_1^2}\sum_{r=2}^{d} \frac{\norm{E_r w}^2}{\Delta_r^2} \frac{1}{(1+\delta_\pm/(S_1\Delta_r))^2}
	\ \ \ \mbox{ ($p=1,2$).}
\end{equation}
Proposition \ref{prop:deltas} shows that $|\delta_{\pm}| = O_n(\sqrt{S_1}\epsilon_1)$,
which combined with the assumption $\epsilon_1 \ll \sqrt{S_1}\Delta_2$ yields $|\delta_\pm| \ll S_1\Delta_2$.
Applying this to Equation \eqref{eqn:local}, we obtain
\begin{equation}
\frac{1}{\norm{F_p w}^2} 
	\ = \ \frac{\epsilon_1^2}{\delta_\pm^2} + \frac{S_2}{S_1^2} (1+o_n(1))
	\ \sim \ \frac{2\epsilon_1^2}{\delta_\pm^2}
	\ \ \ \mbox{ ($p=1,2$),}
\end{equation}
as $\epsilon_1^2/\delta_\pm^2 \sim S_2/S_1^2$ by Lemma \ref{lemma:delta-plus-minus}.
\end{proof}

The use of density matrices simplifies the proof of the following result
as most arguments involving phase factors are no longer necessary.

\begin{lemma} \label{lemma:propositive}
If $t = O_n(\sqrt{S_2}/S_1\epsilon_1)$, then $f(t) = \Omega_n(S_1/\sqrt{S_2})$.
\end{lemma}

\begin{proof}
By Equation \eqref{eqn:genesis1}, we have
$E_1 F_p w = \norm{F_p w}^2 E_1 w/(\zeta_p - S_1)$.
For $p=1,2$, 
using Lemma \ref{lemma:full-expanse} to replace $\norm{F_p w}^2$, we obtain
\begin{equation} \label{eqn:full-support}
	\norm{E_1 F_p w} 
	= \norm{F_p w}^2 \frac{\eone}{|\delta_{\pm}|} 
	\sim \frac{|\delta_\pm|}{2\epsilon_1}.
\end{equation}
Let $\delta = (\zeta_1 - \zeta_2)/2$. Then, the fidelity is given by
\begin{eqnarray*}
f(t) 
	& = & \Tr(\ketbra{w}{w} \sum_{p,q} e^{-it(\zeta_p - \zeta_q)} F_p E_1 F_q) \\
	& = & \sum_{p,q} e^{-it(\zeta_p - \zeta_q)} \braket{E_1 F_p w}{E_1 F_q w} \\
	& = & \sum_{p,q} e^{-it(\zeta_p - \zeta_q)} \eone^2
		\frac{\norm{F_p w}^2}{\zeta_p - S_1} \frac{\norm{F_q w}^2}{\zeta_q - S_1} \\
	& = & \frac{1}{2}(1-\cos(2\delta t))\frac{\delta^2}{\epsilon_1^2} +
		\epsilon_1^2 \sum_{p,q \neq 1,2} e^{-it(\zeta_p - \zeta_q)} 
			\frac{\norm{F_p w}^2}{S_1 - \zeta_p} \frac{\norm{F_q w}^2}{S_1 - \zeta_q} \\
	& \ge & \frac{\delta^2}{\epsilon_1^2},
		\ \ \ \mbox{ by setting $t = \frac{\pi}{2}\delta^{-1}$.}
\end{eqnarray*}
Now, note that $\delta = \frac{1}{2}(\delta_{+} + |\delta_{-}|) \asymp \epsilon_1$.
\end{proof}

We have shown that if $S_1^2/S_2 = \Theta_n(1)$ then $H$ is $S_1$-Groverian.
It remains to show the converse. 

\begin{lemma} \label{lemma:contrapositive}
If $f(t) = \Omega_n(1)$, for some $t = O_n(1/\epsilon_1)$, then $S_2/S_1^2 = \Theta_n(1)$.
\end{lemma}

\begin{proof}
We prove the contrapositive. If $S_2/S_1^2 \not\asymp 1$, or equivalently, $\epsilon_1^2 \not\asymp \delta_{+}^2$ 
by Lemma \ref{lemma:delta-plus-minus}, then $H$ is not Groverian by Theorem \ref{thm:necessary-generic}$(i)$. 
Note $\epsilon_1^2 \asymp \delta_{+}^2$ if and only if $\epsilon_1 \asymp \delta_{+}$.
\end{proof}

Lemma \ref{lemma:propositive} and \ref{lemma:contrapositive} completes the proof of the theorem.
\end{proof}

\medskip
\par\noindent{\bf Remarks}.
We now compare Theorem \ref{thm:spatial-search} in the context of the known results in \cite{cnao16,cnr20}.
Notice that Theorem \ref{thm:spatial-search} implies the main result (Lemma 1) in Chakraborty \etal \cite{cnao16}.
Their result requires the {constant gap condition} $\Delta_2 = \Omega_n(1)$ which is a stronger assumption 
because of the following observation. 
In what follows, we denote $\epsilon_2 = \norm{E_2 w}$.

\begin{fact} \label{fact:constant-gap}
Suppose $\epsilon_1 = o_n(1)$. If $\Delta_2 = \Omega_n(1)$, then $S_1\Delta_2 = \Theta_n(1)$.
\end{fact}

\begin{proof}
Note $\epsilon_2^2 + \Delta_2(1-\epsilon_1^2-\epsilon_2^2) \le S_1\Delta_2 \le 1-\epsilon_1^2$.
The upper bound $S_1\Delta_2 = O_n(1)$ holds immediately.
If $\epsilon_2  = o_n(1)$, then $S_1\Delta_2 = \Omega_n(1)$ follows as $\Delta_2 = \Omega_n(1)$.
Otherwise, $S_1\Delta_2 = \Omega_n(1)$ holds from $\epsilon_2 = \Omega_n(1)$.
\end{proof}

We also note the following relation between $S_1$ and $S_2$ (which can be compared to Lemma 5 in \cite{cnr20}).

\begin{fact} \label{fact:variance} 
$S_1^2/S_2 \le 1-\epsilon_1^2$.
\end{fact}

\begin{proof}
Consider a random variable $Z$ where $Z = 1/\Delta_r$ with probability 
$\epsilon_r^2/(1-\epsilon_1^2)$, for $r=2,\ldots,d$.
Then, $\Exp[Z] = S_1/(1-\epsilon_1^2)$ and $\Exp[Z^2] = S_2/(1-\epsilon_1^2)$.
Now, $\Exp[Z^2] \ge \Exp[Z]^2$ since variance is always nonnegative.
\end{proof}

Next, observe that $S_1 \ge 1-\epsilon_1^2$.
If $S_1 \le 1$, then both $S_1$ and $\sqrt{S_1}$ are constant; otherwise, $\sqrt{S_1} \le S_1$.
To summarize, it is clear that $\sqrt{S_1} \lesssim S_1 \lesssim \sqrt{S_2}$.

As for the main result in Chakraborty \etal \cite{cnr20}, their theorem requires the assumption
$\epsilon_1 \ll S_1 S_2/S_3$ and $\epsilon_1 \ll \sqrt{S_2}\Delta_2$.
In Theorem \ref{thm:spatial-search}, we replace these two assumptions with a single assumption 
$\epsilon_1 \ll \sqrt{S_1}\Delta_2$. The latter assumption is only slightly stronger
than $\epsilon_1 \ll \sqrt{S_2}\Delta_2$ since under the regime of interest, namely, $S_2/S_1^2 = \Theta_n(1)$, 
they are asymptotically equivalent.


\section{Examples}

We analyze some examples of well-known families of graphs (see Figure \ref{fig:families}).
In order to simplify the calculations, we normalize the matrices as $H/\norm{H}$ which places the eigenvalues
in $[-1,1]$ (instead of $[0,1]$). Since the two normalizations are equal up to a factor of $2$, this will not
affect our asymptotic conclusions. 

\bigskip

\begin{figure}[h]
\begin{center}
{\small
\begin{tabular}{|c||c|c|c|c||c|} \hline 
{\em Graph Family}		& {\em Groverian?}	&	$\epsilon_1$	&	$\Delta_2$	&	$S_1$	& {\em Comment} \\ \hline \hline
Clique $K_n$			& Yes				&	$n^{-1/2}$			&	1			&	1		& $\Delta_2 = 1$ \\ 
Expander				& Yes				&	$n^{-1/2}$			&	1			&	-		& $\Delta_2 = 1$ \\ 
Strongly Regular Graph	& Yes				&	$n^{-1/2}$			&	1			&	-		& $\Delta_2 = 1$ \\ 
Hamming $H(n,q)$		& Yes 				&	$q^{-n/2}$			&	$1/n$		& 	1		& $\epsilon_1 \ll \sqrt{S_1}\Delta_2$ \\ 
Johnson $J(n,k)$		& Yes				&	$n^{-k/2}$			& 	$1/k$		&	1		& $\Delta_2 = 1$ \\ 
Grassmann $G_q(n,k)$	& Yes				&	$q^{-{k(n-k)/2}}$	& 	$1-1/q$		&	-		& $\Delta_2 = 1$ \\ 
Distance Regular Graph with	& 					&						& 				&			& \\ 
classical parameters $(d,q,\alpha,\beta)$		& Yes			&	$1/\sqrt{|V(G)|}$			& 	$1-1/q + \alpha/q\beta$		&	-		& $\Delta_2 = 1$ \\ 
Cycle $C_n$				& No				&	$n^{-1/2}$			&	$n^{-2}$	&	$n$		& $\epsilon_1 \not\ll \sqrt{S_1}\Delta_2$ \\ \hline
\end{tabular}}
\vspace{.1in}
\caption{Examples of graph families and $S_1$-Groverian (or optimal spatial search) properties. 
Assume $q,k,d$ are constants and $\alpha \ll \beta$. 
The expressions involving $\epsilon_1$, $\Delta_2$, and $S_1$ are asymptotic in nature.
Some entries are missing as they do not impact the property.
}
\label{fig:families}
\end{center}
\end{figure}

\paragraph{Cliques}
The normalized eigenvalues of $K_n$ are $1$ (with multiplicity $1$) and $-1/(n-1)$ (with multiplicity $n-1$).
So, cliques are Groverian (in fact, with fidelity $1$) since $\Delta_2$ is constant. 
This was, of course, the original observation of Farhi and Gutmann \cite{fg98}.

\paragraph{Expanders}
A graph $G=(V,E)$ is called an $(n,d,c)$-expander if $G$ has $n$ vertices, maximum degree $d$, and for each set
of vertices $W$ of size $|W| \le n/2$, we have $|N(W)| \ge c|W|$, where $N(W)$ is the set of vertices not in 
$W$ but is adjacent to some vertex in $W$. Here, $c$ is called the expansion which is required to be constant. 
It is known that if $G$ is $d$-regular on $n$ vertices, then it is a $(n,d,\Delta_2/2)$-expander provided
$\Delta_2 = \Theta_n(1)$ (see \cite{as}). So, expanders are Groverian since $\Delta_2$ is constant.
This is the main observation of Chakraborty \etal \cite{cnao16}.

\paragraph{Strongly Regular Graphs}
A graph $G_n$ is called {\em strongly regular} with parameter $(n,k,a,c)$
if it is a $k$-regular on $n$ vertices where every pair of adjacent vertices have $a$ common neighbors
and every pair of non-adjacent vertices have $c$ common neighbors.
Let $\theta_1 > \theta_2$ be the non-principal eigenvalues.
Then, $k - c = \theta_1(-\theta_2)$, which implies $\theta_1 = (k-c)/(-\theta_2)$.
If $G_n$ is primitive but not a conference graph, then $-\theta_2 \ge 2$.
Therefore, $\theta_1 \le (k-c)/2 < k/2$. This shows that $\Delta_2$ is constant.
On the other hand, if $G_n$ is a conference graph then $k = (n-1)/2$ and $\theta_1 = (\sqrt{n}-1)/2$
which implies $\Delta_2 = 1-o_n(1)$. Thus, strongly regular graphs are Groverian.
This provides an alternative proof of the result due to Janmark \etal \cite{jmw14}.

\paragraph{Hamming graphs}
The Hamming graph $H(n,q)$, with $q$ is constant, has $(\ZZ/q\ZZ)^n$ as its vertices where
two $n$-tuples are connected if they differ in exactly one dimension.
As $H(n,q)$ has $q^n$ vertices with eigenvalues
$\theta_r = n(q-1) - qr$ with multiplicity $m_r = \binom{n}{r}(q-1)^r$, for $r=0,\ldots,n$,
we have $\Delta_2 = q/n(q-1) \sim 1/n$ and $\epsilon_1 = 1/q^{n/2}$. 
Since $S_1 = \Theta_n(1)$, the Hamming graph $H(n,q)$ is Groverian as
$\epsilon_1 \ll \sqrt{S_1}\Delta_2$.
To see why $S_1 = \Theta_n(1)$, notice that
\[
	S_1 = \frac{1}{q^n} \sum_{r=0}^{n} \binom{n}{r}\frac{n(q-1)}{rq}(q-1)^r
	\sim \frac{1}{q^n} \sum_{r} \binom{n+1}{r+1}(q-1)^r
	\asymp 1.
\]
For the $n$-cube or $H(n,2)$, this was observed by Childs and Goldstone \cite{cg04}.

\paragraph{Johnson graphs}
The Johnson graph $J(n,k)$ has as its vertices the set of $k$-subsets of $\{1,\ldots,n\}$, 
denoted $\binom{[n]}{k}$, where two $k$-subsets $A$ and $B$ are connected if $|A \cap B| = k-1$.
So, $J(n,k)$ has $\binom{n}{k}$ vertices with eigenvalues
$\theta_r = (k-r)(n-k-r)-r$ with multiplicity $m_r = \binom{n}{r} - \binom{n}{r-1}$, for $r=0,\ldots,k$.
Notice $\epsilon_1 = 1/\sqrt{\binom{n}{k}} \sim n^{-k/2}$ and
\[
	\Delta_2 = \frac{\theta_0 - \theta_1}{\theta_0 - \theta_k} = \frac{n}{k(n-k+1)} \sim \frac{1}{k}
\]
which is constant if $k$ is.
As $\Delta_2 = \Omega_n(1)$, this shows that $J(n,k)$, for $k \ge 3$, is Groverian. 
This recovers the results of Wong \cite{w16} and Tanaka \etal \cite{tsp}.

\begin{figure}[t]
\begin{center}
\includegraphics[width=0.4\textwidth]{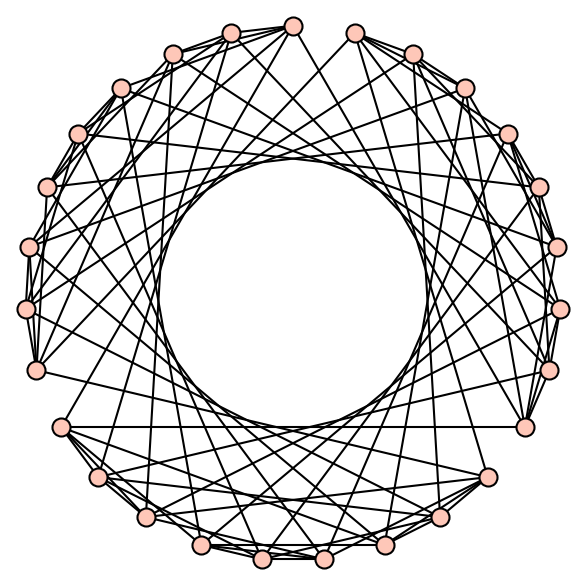}
\hspace{.5in}
\includegraphics[width=0.4\textwidth]{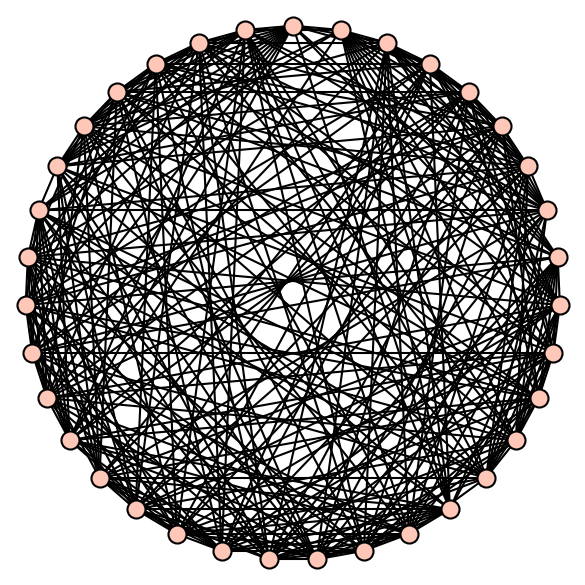}
\vspace{.1in}
\caption{Some distance-regular graphs that are Groverian: Hamming graphs $H(n,q)$ and Grassmann graphs $G_q(n,k)$,
where $q$ and $k$ are constants.
Left: Hamming graph $H(3,3)$. 
Right: Grassmann graph $G_2(4,2)$.
}
\end{center}
\end{figure}

\paragraph{Grassmann graphs}
The Grassmann graph $G_q(n,k)$ has as its vertices the set of $k$-subspaces of the vector space $\FF_q^n$
where two $k$-subspaces $A$ and $B$ are connected if $\dim(A \cap B) = k-1$.
We assume $n \ge 2k$.
The number of vertices of $G_q(n,k)$ is
\[
	N = \qbinom{n}{k} \sim q^{k(n-k)}.
\]
The eigenvalues are given by 
\[
	\theta_r = q^{r+1}\qbinom{k-r}{1}\qbinom{n-k-r}{1} - \qbinom{r}{1} \sim q^{n-r-1}
\]
with multiplicity 
\[
	m_r = \qbinom{n}{r} - \qbinom{n}{r-1}, 
\]
for $r=0,\ldots,k$.
Notice $\epsilon_1 = 1/\sqrt{N}$ and the normalized eigenvalue gap is
\[
	\Delta_2 = 1 - \frac{\theta_1}{\theta_0} \sim 1 - \frac{1}{q},
\]
which is constant if $q$ is.
As $\Delta_2 = \Omega_n(1)$, this shows that $G_q(n,k)$, for $k \ge 2$, is Groverian. 
Note $k=1$ recovers the cliques.

\paragraph{Distance-regular graphs with classical parameters}
The eigenvalues of distance-regular graphs with classical parameters $(d,q,\alpha,\beta)$
are given in Juri\v{s}i\'{c} and Vidali \cite{jv17} (see Lemma 2). In particular,
$\theta_0 = [d]_q \beta$ and $\theta_1 = [d-1]_q (\beta - \alpha) - 1$, which implies
\[
	\Delta_2 = 1 - \frac{[d-1]_q(\beta - \alpha) - 1}{[d]_q \beta} \sim 1 - \frac{1}{q} + \frac{1}{q}\frac{\alpha}{\beta}
\]
is constant provided $\alpha \ll \beta$.

\paragraph{Cycles}
The results from Section \ref{section:necessary} can be used to prove that cycles are not Groverian.
The normalized eigenvalues of $C_n$ are given by
$\theta_r = \frac{1}{2}(1 + \cos(2\pi (r-1)/n))$, $r = 1,\ldots,n$.
Using $\cos(x) \sim 1 - x^2/2$, we have $\Delta_r \sim r^2/n^2$, for small positive values of $r$.
Since $C_n$ is a circulant, $\epsilon_r \sim 1/\sqrt{n}$ for all $r=1,\ldots,n$.
Note that 
\[
	S_1 
	= \frac{1}{n} \sum_{r=1}^{n-1} \frac{1}{1 - \cos(2\pi r/n)}
	\sim 2\int_{2\pi/n}^{\pi-2\pi/n} \frac{dx}{1-\cos(x)}
	= 2\cot(\pi/n)
	= \Theta_n(n).
\]
Next, we show spatial search fails as $\gamma$ ranges over all values.
\medskip
\par\noindent{\em Case (i).} $\gamma = S_1 + \omega_n(\sqrt{n})$:
Notice Theorem \ref{thm:necessary-above-s1} applies since $\sqrt{n} = \epsilon_1 S_1 \ll \gamma - S_1 = \omega_n(\sqrt{n})$.

\ignore{
\medskip
\par\noindent{\em Case (iv).} $\gamma = S_1 - \omega_n(\sqrt{n})$:
	We argue that Theorem \ref{thm:necessary-below-s1} applies.
	The first assumption $\epsilon_1 \gamma \ll S_1 - \gamma$ holds since
	$\epsilon_1 \gamma = \sqrt{n} - \omega_n(1)$ and $S_1 - \gamma = \omega_n(\sqrt{n})$.
	To verify the other assumption with $B=3$, notice that
	\[
		\sum_{r \ge 3} \frac{|\braket{w}{y_r}|^2}{\gamma - \zeta_r} 
		\ \le \
		\frac{C}{\gamma} \int_{2\pi/n}^{\pi-2\pi/n} \frac{dx}{1 - \cos(x)}
		\ \lesssim \ \frac{n}{\gamma}
	\]
	for some constant $C$. This shows that the second assumption is satisfied as
	\[
		\epsilon_1 \sum_{r \ge 3} \frac{|\braket{w}{y_r}|^2}{\gamma - \zeta_r} \ \lesssim \ \frac{\sqrt{n}}{\gamma} = o_n(1).
	\]
}

\medskip
\par\noindent{\em Case (ii).} $\gamma = S_1 \pm o_n(n)$:
	Here, Theorem \ref{thm:necessary-around-s1} applies since $\epsilon_1^\alpha/\Delta_r > S_1 + \beta$ holds, for $\alpha \in (1,2)$.
	With $|\beta| = o_n(n)$, evidently $n^{-\alpha/2} \times n^2 \ \gg \ cn \pm o_n(n)$.

\medskip
\par\noindent{\em Case (iii).} $\gamma = O_n({n})$:
	To apply Theorem \ref{thm:necessary-around-s1}, we rewrite Equation \eqref{eqn:penultimate} as
	\begin{equation} \label{eqn:goal}
	\frac{1}{2} \sum_{\substack{r \ge 2:\\\Delta_r < \epsilon_1^\alpha/\gamma}} \frac{\epsilon_r^2}{\Delta_r}
	\ > \
	\gamma\epsilon_1^{2-\alpha} + S_1 - \gamma,
	\end{equation}
	since $\gamma > 0$.
	Because $\gamma = O_n({n})$, $S_1 = O_n(n)$, and $\epsilon_1 = 1/\sqrt{n}$, the right-hand side is at most $O_n(n)$.
	Next, we determine the set of indices $r$ so that $\Delta_r < \epsilon_1^\alpha/\gamma$ which are
	included in the summation. The smallest the upper bound $\epsilon_1^\alpha/\gamma$ can be is $1/n^{1+\alpha/2}$.
	We have
	\[
	\Delta_r \sim \frac{r^2}{n^2} \ll \frac{1}{n^{1 + \alpha/2}}, \ \ \ \mbox{ for $r = O_n(1)$.}
	\]
	Thus, if we restrict the indices for which $r \le B$, for a large enough constant $B$, we get
	\[
	\sum_{\substack{r \ge 2:\\\Delta_r < \epsilon_1^\alpha/\gamma}} \frac{\epsilon_r^2}{\Delta_r}
	\ \ge \
	\sum_{r=2}^{B} \frac{\epsilon_r^2}{\Delta_r}
	\ \sim \ 
	\sum_{r=2}^{B} \frac{n^{-1}}{r^2 n^{-2}} 
	\ = \ \Omega_n(n).
	\]
	Thus, Equation \eqref{eqn:goal} is satisfied provided $B$ is large enough; whence spatial search fails.


\begin{figure}[t]
\begin{center}
\includegraphics[width=0.4\textwidth]{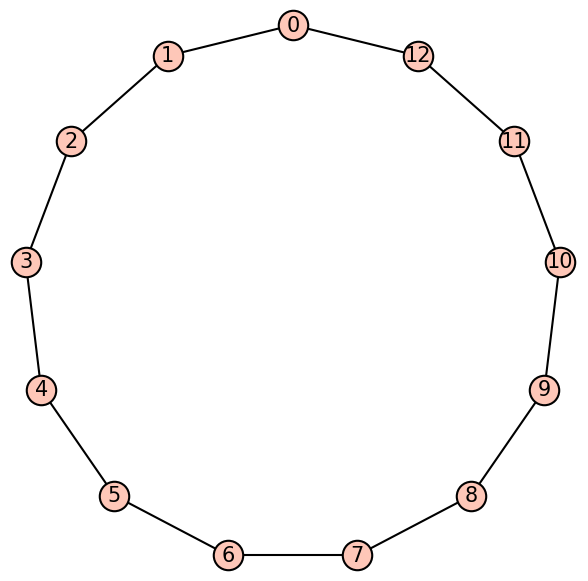}
\hspace{.5in}
\includegraphics[width=0.4\textwidth]{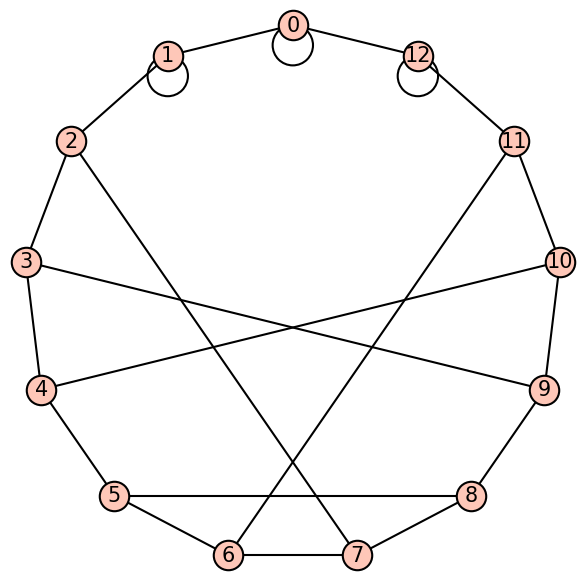}
\vspace{.1in}
\caption{Sparsity is not a good indicator: 
for prime $p$, $C_{p}$ is not Groverian, but $C_{p}$ with the extra ``matching''
$x \mapsto x^{-1}\pmod{p}$ is Groverian (as it is an expander, see Vadhan \cite{v12}).
Left: $C_{13}$. Right: the nonsimple $3$-regular expander $C_{13}$ plus the modular inverse matching.
}
\end{center}
\end{figure}


\section{Concluding remarks}

In this work, we proved a simpler characterization of graphs with the optimal spatial search property
(therein called Groverian). This improves a previous characterization obtained by Chakraborty \etal \cite{cnr20}.
We applied this characterization to recover known results about some families of Groverian graphs and also 
to find new families of Groverian graphs.
Along the way, we also proved a lower bound for spatial search on arbitrary graphs for constant fidelity.
This extends a known lower bound due to Farhi and Gutmann \cite{fg98} which holds for vertex transitive graphs
and for fidelity that is one. We also developed a family of necessary conditions for a graph to be Groverian.
Our necessary conditions, which are built upon observations developed by Chakraborty \etal \cite{cnr20}, can
be applied to provide rigorous proofs to show why some families of graphs (for example, cycles)
lack the spatial search property.

We conclude with some open questions from the present work:
\begin{enumerate}
\item Is the Groverian property determined by spectra? 
	Moreover, is the condition $\epsilon_1 \ll \sqrt{S_1}\Delta_2$ necessary for optimal characterization?
\item Can a family of graphs be Groverian with non-constant $S_1$?
	To the best of our knowledge, all families of graphs known to be Groverian satisfy $S_1 = \Theta_n(1)$.
\item When is the Groverian property (almost) periodic? 
\item How robust is the Groverian property against noise? Regev and Schiff \cite{rs08} had ruled this out
	for the discrete-time case.
\end{enumerate}


\section*{Acknowledgments}

We thank Tom Wong for helpful comments and Andris Ambainis for answering our questions about \cite{cnao16}.
C.T. and W.X. would also like to thank Warren Lord for his insightful comments.
We also would like to thank the reviewers for their constructive comments which help improve the paper.



\end{document}